\providecommand{\sub}{}
\providecommand{\comment}{}
\providecommand{\Fm}{}
\providecommand{\Fq}{}
\providecommand{\Fa}{}
\providecommand{\Fc}{}
\providecommand{\judgement}[2]{}
\providecommand{\ruledef}[1]{}
\providecommand{\ruledefN}[2]{}
\providecommand{\ruledefR}[2]{}
\providecommand{\ruleref}[1]{}
\providecommand{\rulerefN}[2]{}
\providecommand{\gap}{}
\providecommand{\comment}{}
\providecommand{\unboxed}{}
\providecommand{\setboxed}[2]{}
\providecommand{\highlight}[1]{}
\providecommand{\reduces}{}
\providecommand{\Fsub}{}
\providecommand{\ruleref}[1]{}
\providecommand{\fn}[3]{}
\providecommand{\fna}[3]{}
\providecommand{\Sfna}[4]{}
\providecommand{\Qfna}[4]{}
\providecommand{\Sfn}[3]{}
\providecommand{\Qfn}[3]{}
\providecommand{\app}[3]{}
\providecommand{\capp}[3]{}
\providecommand{\apphighlight}[3]{}
\providecommand{\Sapp}[2]{}
\providecommand{\Qapp}[2]{}
\providecommand{\mapstoty}{}
\providecommand{\mapstote}{}
\providecommand{\qual}[1]{}
\providecommand{\fntype}[2]{}
\providecommand{\cfntype}[2]{}
\providecommand{\Sfntype}[3]{}
\providecommand{\Qfntype}[3]{}
\providecommand{\qtyp}[2]{}
\providecommand{\cks}[2]{}
\renewcommand{\comment}[1]{}
\renewcommand{\sub}{~\texttt{<:}~}
\renewcommand{\Fq}{System $\texttt{F}_{\texttt{<:Q}}$\xspace}
\renewcommand{\Fa}{System $\texttt{F}_{\texttt{<:QA}}$\xspace}
\renewcommand{\Fm}{System $\texttt{F}_{\texttt{<:QM}}$\xspace}
\renewcommand{\Fc}{System $\texttt{F}_{\texttt{<:QC}}$\xspace}
\renewcommand{\Fsub}{System $\texttt{F}_{\texttt{<:}}$\xspace}
\renewcommand{\judgement}[2]{{\bf\textsf{#1}} \hfill #2}
\newcounter{RuleRef}
\renewcommand{\ruledef}[1]{%
  \refstepcounter{RuleRef}\textsc{#1}\label{rule:#1}}
\renewcommand{\ruledefN}[2]{%
  \refstepcounter{RuleRef}\textsc{#2}\label{rule:#1}}
\renewcommand{\ruledefR}[2]{%
  \refstepcounter{RuleRef}#2\label{rule:#1}}
\renewcommand{\ruleref}[1]{\textsc{(\hyperref[rule:#1]{#1})}}
\renewcommand{\rulerefN}[2]{\textsc{(\hyperref[rule:#1]{#2})}}
\renewcommand{\gap}{\quad\quad}
\DeclareMathOperator{\Boxed}{\texttt{box}}
\DeclareMathOperator{\async}{\texttt{async}}
\DeclareMathOperator{\sync}{\texttt{sync}}
\DeclareMathOperator{\fv}{\texttt{fv}}
\DeclareMathOperator{\readonly}{\texttt{readonly}}
\DeclareMathOperator{\mutable}{\texttt{mutable}}
\DeclareMathOperator{\kbarrier}{\texttt{barrier}}
\DeclareMathOperator{\karg}{\texttt{arg}}
\DeclareMathOperator{\kapp}{\texttt{app}}
\DeclareMathOperator{\ktarg}{\texttt{targ}}
\DeclareMathOperator{\kqarg}{\texttt{qarg}}
\DeclareMathOperator{\tracked}{\texttt{tracked}}
\DeclareMathOperator{\eval}{\texttt{eval}}
\definecolor{light-gray}{gray}{0.95} 
\renewcommand{\highlight}[1]{\colorbox{light-gray}{$\displaystyle #1$}}
\renewcommand{\unboxed}{\texttt{unbox}~}
\renewcommand{\setboxed}[2]{\texttt{set-box!}~{#1}~{#2}}
\renewcommand{\reduces}{\;\longrightarrow\;}
\renewcommand{\ruleref}[1]{\textsc{(\hyperref[rule:#1]{#1})}}
\renewcommand{\fn}[2]{\lambda ({#1}) . {#2}}
\renewcommand{\fna}[3]{\lambda ({#2})_{#1} . {#3}}
\renewcommand{\Sfn}[3]{\Lambda ({#1} \sub {#2}) . {#3}}
\renewcommand{\Sfna}[4]{\Lambda ({#2} \sub {#3})_{#1} . {#4}}
\renewcommand{\Qfn}[3]{\Lambda ({#1} \sub {#2}) . {#3}}
\renewcommand{\Qfna}[4]{\Lambda ({#2} \sub {#3})_{#1} . {#4}}
\renewcommand{\app}[2]{{#1}({#2})}
\renewcommand{\apphighlight}[3]{{#1}\highlight{[{#2}]}({#3})}
\renewcommand{\capp}[3]{{#1}\{\!\!\{{#2}\}\!\!\}({#3})}
\renewcommand{\Sapp}[2]{{#1}[{#2}]}
\renewcommand{\Qapp}[2]{{#1}\{\!\!\{{#2}\}\!\!\}}
\renewcommand{\fntype}[2]{{#1} \to {#2}}
\renewcommand{\cfntype}[3]{({#1} : {#2}) \to {#3}}
\renewcommand{\Sfntype}[3]{\forall({#1} \sub {#2}).{#3}}
\renewcommand{\Qfntype}[3]{\forall({#1} \sub {#2}).{#3}}
\renewcommand{\qtyp}[2]{\{{#1}\}~{#2}}
\renewcommand{\cks}[2]{\langle {#1}, {#2}\rangle}
\definecolor{qualifier-blue-bg}{RGB}{233, 239, 243}
\definecolor{qualifier-blue}{RGB}{45, 95, 134}
\renewcommand{\qual}[1]{\colorbox{qualifier-blue-bg}{\color{qualifier-blue}{\tt #1}}}
\DeclareMathOperator{\upqual}{\texttt{upqual}}
\DeclareMathOperator{\assert}{\texttt{assert}}
\renewcommand{\mapstoty}{\mapsto_{\tt type}}
\renewcommand{\mapstote}{\mapsto_{\tt term}}
\lstdefinelanguage{scala}{
  morekeywords={%
          abstract,case,catch,class,def,do,else,extends,%
          false,final,finally,for,forSome,if,implicit,import,lazy,%
          match,new,null,object,override,package,private,protected,%
          return,sealed,super,this,throw,trait,true,try,type,%
          val,var,while,with,yield},
  otherkeywords={=>,<-,<\%,<:,>:,\#,@},
  sensitive=true,
  morecomment=[l]{//},
  morecomment=[n]{/*}{*/},
  morestring=[b]",
  morestring=[b]',
  morestring=[b]"""
}[keywords,comments,strings]
\begin{document}
\title{Qualifying \Fsub}
\author{Edward Lee}
\orcid{0000-0001-7057-0912}
\affiliation{
   \department{Computer Science}              
   \institution{University of Waterloo}            
   \streetaddress{200 University Ave W.}
   \city{Waterloo}                                                     
   \state{ON}
   \postcode{N2L 3G1}
   \country{Canada}                    
 }
\author{Yaoyu Zhao}
\affiliation{
   \department{Computer Science}              
   \institution{University of Waterloo}            
   \streetaddress{200 University Ave W.}
   \city{Waterloo}                                                     
   \state{ON}
   \postcode{N2L 3G1}
   \country{Canada}                    
 }
\author{Ondřej Lhoták}
\orcid{0000-0001-9066-1889}
\affiliation{
   \department{Computer Science}              
   \institution{University of Waterloo}            
   \streetaddress{200 University Ave W.}
   \city{Waterloo}                                                     
   \state{ON}
   \postcode{N2L 3G1}
   \country{Canada}                    
 }
\author{James You}
\orcid{0009-0000-5906-0305}
\affiliation{
   \department{Computer Science}              
   \institution{University of Waterloo}            
   \streetaddress{200 University Ave W.}
   \city{Waterloo}                                                     
   \state{ON}
   \postcode{N2L 3G1}
   \country{Canada}                    
 }
\author{Kavin Satheeskumar}
\orcid{0009-0002-1106-2429}
\affiliation{
   \department{Computer Science}              
   \institution{University of Waterloo}            
   \streetaddress{200 University Ave W.}
   \city{Waterloo}                                                     
   \state{ON}
   \postcode{N2L 3G1}
   \country{Canada}                    
 }

\author{Jonathan Brachthäuser}
\orcid{0000-0001-9128-0391}
\affiliation{
   \department{Computer Science}              
   \institution{University of Tübingen}            
   \streetaddress{Sand 13}
   \city{Tübingen}                                                     
   \state{BaWü}
   \postcode{72076}
   \country{Germany}                    
}
\date{January 2023}

\begin{abstract}
Type qualifiers offer a lightweight mechanism for enriching existing type systems to enforce additional, desirable, program invariants.  
They do so by offering a restricted but effective form of subtyping.  
While the theory of type qualifiers is well understood and present in many programming languages today, polymorphism over type qualifiers is an area that is less examined.  
We explore how such a polymorphic system could arise by constructing a calculus \Fq which combines the higher-rank bounded polymorphism of \Fsub with the theory of type qualifiers. We explore how the ideas used to construct \Fq can be reused in situations where type qualifiers naturally arise---in reference immutability, function colouring, and capture checking.  Finally, we re-examine other qualifier systems in the literature in light of the observations presented while developing \Fq.
\end{abstract}

\begin{CCSXML}
<ccs2012>
   <concept>
       <concept_id>10011007.10011006.10011008</concept_id>
       <concept_desc>Software and its engineering~General programming languages</concept_desc>
       <concept_significance>500</concept_significance>
       </concept>
   <concept>
       <concept_id>10011007.10011006.10011041</concept_id>
       <concept_desc>Software and its engineering~Compilers</concept_desc>
       <concept_significance>500</concept_significance>
       </concept>
 </ccs2012>
\end{CCSXML}

\ccsdesc[500]{Software and its engineering~General programming languages}
\ccsdesc[500]{Software and its engineering~Compilers}

\keywords{\Fsub, Type Qualifiers, Type Systems}  

\maketitle
\tcbset{
  on line,
  boxsep=4pt,
  left=0pt,
  right=0pt,
  top=0pt,
  bottom=0pt,
  boxrule=0pt, 
  colframe=white,
  colback=qualifier-blue-bg,
  coltext=qualifier-blue, 
  highlight math style={enhanced},
}

\section{Introduction}
Static type systems classify the values a program reduces to. For example, the signature of the function

\begin{lstlisting}[language=Scala]
    def toLowerCase(in: String): String = { ... }
\end{lstlisting}

\noindent
{\tt toLowerCase} enforces that it takes in a {\tt String} as an argument and returns a {\tt String} as a result.  If strings are implemented as mutable heap objects, how would we express the additional property that {\tt toLowerCase} does not its mutate its input? 

There are at least two ways to address this.  We can view the modification of {\tt toLowerCase}'s argument {\tt in} as a property of {\tt toLowerCase} or we can view mutability as a property of the argument string {\tt in} itself.  The former viewpoint leads to solutions like (co-)effect systems \cite{10.1145/2628136.2628160} that describe the relation of a function to the context it is called in.
The latter viewpoint, of viewing it as a property of the argument, leads to systems that enrich the types of values with additional information. In this paper, we adopt the latter view.

{\it Type qualifiers} by \citet{10.1145/301618.301665} is one such system. In such a system, we could qualify the type of {\tt toLowerCase}'s argument with the type qualifier \qual{const} to express that {\tt toLowerCase} cannot modify its argument.  We may choose to annotate its result with the type qualifier \qual{const} to indicate that
its result is a \qual{const} {\tt String} which cannot be changed by {\tt toLowerCase}'s caller.  
\begin{lstlisting}[language=Scala]
    def toLowerCase(in: const String): const String = {...}
\end{lstlisting}

\noindent
The function {\tt toLowerCase} now accepts an immutable {\tt String} as an argument and presumably returns a new {\tt String} that is a copy of its argument except in lowercase. 
More importantly, since the input string is qualified as {\tt const}, we know that this version {\tt toLowerCase} cannot mutate the input string; for example, such as calling a method like {\tt in.setCharAt(0, 'A')}, which would replace the character of index {\tt 0} of the string with the character {\tt A}.

Perhaps this is too restrictive.  After all, {\tt toLowerCase} will allocate a new {\tt String} and does not impose invariants on it; its caller should be permitted to mutate the value returned.  We should instead annotate {\tt toLowerCase} as follows, with a \qual{mutable} qualifier on its return value. 
\begin{lstlisting}[language=Scala]
    def toLowerCase(in: const String): mutable String = {...}
\end{lstlisting}

\noindent Subtyping naturally arises in this context---a {\tt \qual{mutable} String} can be a subtype of {\tt \qual{const} String}; this change will not alter the semantics of existing calls to {\tt toLowerCase} to break.

Similarly, it would be impractical if {\tt toLowerCase} only accepted immutable {\tt String}s.  After all, any operation one
could perform on a {\it immutable} {\tt String} one should be semantically valid on a {\it mutable} {\tt String} as well.  
Therefore a {\tt \qual{mutable} String} should ideally be a subtype of {\tt \qual{const} String}. If we wanted to, we should be to chain calls to {\tt toLowerCase}!
\begin{lstlisting}[language=Scala]
    toLowerCase(toLowerCase("HELLO WORLD")) == "hello world"
\end{lstlisting}

\citet{10.1145/301618.301665} were the first to recognize this natural subtyping relation induced by type qualifiers, which permitted type qualifiers to be integrated easily into existing type systems with subtyping.  
Perhaps the most well known qualifier is \qual{const}.
\qual{const} is used to mark particular values as read-only or {\it immutable} and it is found in many languages and language extensions \cite{DBLP:books/daglib/0019344,10.1145/3386323,10.1145/1103845.1094828}. 
Other languages, such as OCaml and Rust, are exploring more exotic qualifiers to encode properties like locality, linearity, exclusivity, and synchronicity \cite{Slater_2023_June,Slater_2023,Wuyts_Scherer_Matsakis}.
Qualifiers are so easy to use that many type system extensions start as type qualifier annotations on existing types;
for Java there is a framework \cite{10.1145/1390630.1390656} for doing so, and it has been used to model extensions to Java for checking {\it nullability}, {\it energy consumption}, and {\it determinism} amongst others using type qualifiers.

While type qualifiers themselves are well-explored, {\it qualifier polymorphism} is still understudied.  
Sometimes parametric polymorphism is not necessary when subtyping is present.  
For example, the type signature that we gave to {\tt toLowerCase}, {\tt \qual{const} String => \qual{mutable} String} is indeed the most permissive type that may be assigned.  
In languages with subtyping, variables are only necessary to relate types and qualifiers in both co- and contravariant positions; otherwise we can use their respective type bounds~\cite[Chapter~4.2.1]{Dolan}.  
For example, while we could have made {\tt toLowerCase} polymorphic using a qualifier variable {\tt Q} over the immutability of its input, such a change is unnecessary as we can simply replace {\tt Q} with its upper bound \qual{const} to arrive at the monomorphic but equally general version of {\tt toLowerCase} from above.
\begin{lstlisting}[language=Scala]
    def toLowerCase[Q <: const](in: Q String): mutable String = {...}
\end{lstlisting}

However, variables are {\it indeed} necessary when relating types and qualifiers in covariant positions to types and qualifiers in contravariant positions.  For example, consider a {\tt substring} function.  Which qualifiers should we assign its arguments and return value?
\begin{lstlisting}[language=Scala]
    def substring(in: ??? String, from: Int, to: Int): ??? String = {...}
\end{lstlisting}

\noindent
Clearly a {\tt substring} of an immutable string should itself be immutable, but also a {\tt substring} of a mutable string should be mutable as well.  To express this set of new constraints, we need {\it parametric qualifier polymorphism}.
\begin{lstlisting}[language=Scala]
    def substring[Q <: const](in: Q String, from: Int, to: Int): Q String
\end{lstlisting}

We also need to consider how qualifier polymorphism interacts with type polymorphism.  For example,
what should be the type of a function like {\tt slice}, which returns a subarray of an array?
It needs to be parametric over the the type of the elements stored in the array, where the element type itself could be qualified.  
This raises the question---should type variables range over unqualified types or both unqualified {\it and} qualified types? 
Foster's original system does not address this issue, and existing qualifier systems disagree on what type variables range over and whether or not type variables can be qualified at all.
For reasons we will demonstrate later in Section \ref{section:poly}, type variables should range over unqualified types;
to achieve polymorphism over both types and qualifiers, we need both type variables and qualifier variables for orthogonality.

\begin{lstlisting}[language=Scala]
    def slice[Qa<:const, Qv<:const, T<:Any](in: Qa Array[Qv T]): Qa Array[Qv T]
\end{lstlisting}

Another underexplored area is that of {\it merging} type qualifiers, especially in light of parametric qualifier polymorphism. 
For example, consider the type qualifiers \qual{throws} and \qual{noexcept}, expressing that a function may throw an exception or that it does not throw any exception at all. 
Without polymorphism, it is easy to combine qualifiers. 
For example, a function like {\tt combined}, that calls both pure and exception-throwing functions should be qualified
with the union of the two qualifiers, \qual{throws}, expressing that an exception could be thrown from the calling function.
\begin{lstlisting}[language=Scala]
    def pure() = 0                              // (() => Unit) noexcept
    def impure() = throw new Exception("Hello") // (() => Unit) throws
    def combined() = { pure(); impure() }       // (() => Unit) throws 
\end{lstlisting}

\noindent Things are more complicated in the presence of qualifier parametric higher-order functions, such as:
\begin{lstlisting}[language=Scala]
    def compose[A,B,C,Qf,Qg](f: (A => B) Qf, g: (B => C) Qg)): (A => C) ???
      = (x) => g(f(x))  
\end{lstlisting}
What should be the qualifier on the return type {\tt (A => C)} of the function?
Intuitively, if either {\tt f} or {\tt g} throws an exception, then the result of {\tt compose} should be qualified with \qual{throws}, but if neither throws any exception, then the composition should be qualified with \qual{noexcept}. 
Ideally we would like some mechanism for specifying the {\it union} of the qualifiers annotated on both {\tt f} and {\tt g}.
\begin{lstlisting}[language=Scala]
    def compose[A,B,C,Qf,Qg](f: (A => B) Qf, g: (B => C) Qg)): (A => C) {Qf | Qg}
\end{lstlisting}

\noindent Existing qualifier systems today have limited support for these use cases.
\citet{10.1145/301618.301665}'s original system is limited to simple ML-style qualifier polymorphism with no mechanism for specifying qualifier-polymorphic function types, and has limited support for combining qualifiers.  
Systems that do support explicit qualifier polymorphism like that of \citet{10.1145/2384616.2384619} partially ignore the interaction between combinations of qualifier variables and their bounds, or present application-specific {\it subqualification} semantics seen in \citet{10.1145/3618003} or \citet{wei2023polymorphic}.  Must this always be the case?  Is there something in common we can
generalize and apply to give a {\it design recipe} for designing qualifier systems with subqualification and polymorphism?

We believe this does not need to be the case; we show that it {\it is} possible to add qualifier polymorphism without {\it losing the natural lattice structure of type qualifiers}, and 
that there is a natural way to reconcile type polymorphism with qualifier polymorphism as well.

To illustrate these ideas, we start by first giving a {\it design recipe} for constructing a 
qualifier-polymorphic enrichment \Fq of \Fsub, much in the same way \citet{10.1145/301618.301665} gives a design recipe for adding qualifiers to a base simply-typed lambda calculus.  Our recipe constructs a calculus with the following desirable properties:
\begin{itemize}
    \item {\bfseries\sffamily Higher-rank qualifier and type polymorphism:}  We show how
        to add higher-rank qualifier polymorphism to a system with higher-rank type polymorphism in Section \ref{section:fq}.
    \item {\bfseries\sffamily Natural subtyping with qualifier variables:} We show that 
        the subtyping that type qualifiers induce extends naturally even when working with
        qualifier variables.  We achieve this by using the {\it free lattice} generated
        over the original qualifier lattice.  We illustrate these ideas, first in a simplified context over a fixed two-point qualifier lattice in Section \ref{section:fq}
        and generalize to an arbitrary bounded qualifier lattice in Section \ref{section:multiple}.
    \item {\bfseries\sffamily Easy meets and joins:} As we generalize the notion of a qualifier
        to that of an element from the free (qualifier) lattice, we recover the ability to combine qualifiers using meets and joins.
\end{itemize}

Next, to demonstrate the applicability of our qualifier polymorphism design recipe, we show how one can model three natural problems -- {\it reference immutability}, {\it function colouring}, and {\it capture tracking}, using the ideas used to develop \Fq in Section \ref{section:applications}.  
We then discuss how type polymorphism can interact with qualifier polymorphism in Section \ref{section:poly} to justify our design choices.
We then re-examine a selection of other qualifier systems in light of our observations developed in our free lattice-based subqualification recipe in Section \ref{section:revisit} to see how their subqualification rules fit in our free lattice based design recipe.  Finally, we close with a discussion of other related work in Section \ref{section:related}.

Our soundness proofs are mechanized in the Coq proof assistant; details are discussed in Section~\ref{section:mechanization}.

\section{Qualified Type Systems}
\label{section:qualifier}
In this section, we introduce \Fq, a simple calculus with support for qualified types as well as type- and qualifier polymorphism.
We start off with a brief explanation of what type qualifiers are (Subsection \ref{section:simple}), introduce \Fq (Subsection \ref{section:fq}), and show that it satisfies the standard soundness theorems (Subsection \ref{section:metatheory}).

\subsection{A Simply-Qualified Type System}
\label{section:simple}
    As \citet{10.1145/301618.301665} observes, type qualifiers induce a simple, yet highly useful form of subtyping on qualified types.
    Consider a qualifier like \qual{const}, which qualifies an existing type to be read-only.  It comes
    equipped with a dual qualifier \qual{mutable} which qualifies an existing type to be mutable.     
    The type {\tt \qual{const} T} is a \emph{supertype} of {\tt \qual{mutable} T}, for all types {\tt T}; a mutable value can be used wherever an immutable value is expected. Other qualifier pairs induce a \emph{subtype}, like \qual{noexcept} and \qual{throws}---it is sound to use a function which throws no exception in a context which would handle exceptions.
    Figure \ref{figure:qual:examples} provides an overview of some qualifiers and describes which invariants they model.

    \begin{figure}
        \begin{center}
        \begin{tabular}{m{4cm} m{9cm}}
         {\bfseries\sffamily Qualifiers} & {\bfseries\sffamily Description} \\ \toprule
         {\tt \qual{mutable} \sub \qual{const}} & Mutability; a mutable value could be used anywhere an immutable value is expected.  A {\it covariant} qualifier, as \qual{mutable} is often omitted. \\ \midrule
         {\tt \qual{noexcept} \sub \qual{throws}} & Exception safety; a function which throws no exceptions can be called
            anywhere a function which throws could. A {\it contravariant} qualifier, as \qual{throws}
            is often omitted. \cite{maurer-2015} \\ \midrule
         {\tt \qual{sync} \sub \qual{async}} & Synchronicity; a function which is synchronous and does not suspend can be
            called in contexts where a function which is asynchronus and suspends could. {\it Covariant}, as \qual{sync} is assumed by default.  \\ \midrule
         {\tt \qual{nonnull} \sub \qual{nullable}} & Nullability; a value which is guaranteed not to be null can
            be used in a context which can deal with nullable values.  {\it Covariant}, in systems
            with this qualifier -- most values ought not to be null.\\
        \end{tabular}
        \end{center}        
        \caption{Examples of type qualifiers}
        \label{figure:qual:examples}
    \end{figure}

    Often one of the two qualifiers is assumed by omission -- for example \qual{mutable} and \qual{throws} are often omitted; references are assumed to be mutable unless otherwise specified,
    and similarly functions are assumed to possibly throw exceptions as well.  Qualifiers like \qual{const} where the smaller qualifier is omitted are {\it positive}, or {\it covariant}; by example, {\tt \qual{const} String} is a subtype of a unqualified {\tt String}.
    Conversely, qualifiers like \qual{noexcept} are {\it negative}, or {\it contravariant};
    {\tt String => String \qual{noexcept}} is a subtype of {\tt String => String}.

\subsection{Qualifying a Language}
    The observation that qualifiers induce subtyping relationships allows language designers to seamlessly integrate support for type qualifiers into existing languages with subtyping.
    As \citet{10.1145/301618.301665} point out, these qualifiers embed into a qualifier lattice structure $\mathcal{L}$, and they give a design recipe for enriching an existing type system
    with support for type qualifiers.
    \begin{enumerate}
        \item First, embed qualifiers into a lattice $\mathcal{L}$.  For example,
              \qual{const} and \qual{mutable} embed into a two-point lattice,
              where \qual{const} is $\top$ and \qual{mutable} is $\bot$.  Other example qualifiers
              (and their embeddings) are described in Figure \ref{figure:qual:examples}.
        \item Second, extend the type system so that it operates on {\it qualified types} --
            a pair $\qtyp{l}{T}$ where $l$ is a qualifier lattice element and $T$ a base type
            from the original system.  This is done in two steps.
        \item Embed qualifiers into the subtyping system.  Typically, for two
              qualified types $\qtyp{l_1}{T_1}$ and $\qtyp{l_2}{T_2}$ such that $l_1 \sqsubseteq l_2$ and
              $T_1 \sub T_2$ one will add the subtyping rule $\qtyp{l_1}{T_1} \sub \qtyp{l_2}{T_2}$.
        \item Add rules for {\it introducing qualifiers}, typically in the introduction forms
              for typing values.
        \item Finally, augment the other typing rules, typically elimination forms,
                so that qualifiers are properly accounted for.  One may also additionally
                add an {\it assertion rule} for statically checking qualifiers as well.
   \end{enumerate}

\subsection{Higher-rank Polymorphism}
\label{section:fq}

Foster's original work allows one to add qualifiers to an existing type system. As we discussed earlier, we want more, though:\begin{enumerate}
    \item {\bfseries\sffamily Qualifier Polymorphism:}  Certain functions ought to be polymorphic in the
        qualifiers they expect.  For example, from our introduction,
        we should be able to express a {\tt substring} function which is polymorphic in the mutability of the string passed to it.  While this is easy enough, as \citet{10.1145/301618.301665} shows, the interaction
        of lattice operations with qualifier variables is not so easy, as we discuss below.

    \item {\bfseries\sffamily Merging Qualifiers:}  We often need to merge qualifiers when constructing more
        complicated values.  Merging is easy when working with a lattice; we can just take the lattice's underlying join ($\sqcup$) or meet ($\sqcap$) operation.  But how do we reason about meets or joins of qualifier variables?  For example, in a \qual{noexcept} qualifier system we should be able to collapse the qualifier on the result of a function like {\tt twice} which composes a function with itself from $\qual{Q} \sqcup \qual{Q}$ to just \qual{Q}; the result of {\tt twice} throws if {\tt f} throws or if {\tt f} throws,
        which is namely just if {\tt f} throws.
\begin{lstlisting}[language=Scala]
    def twice[A, Q](f: (A => A) Q): (A => A) Q = compose(f, f)
\end{lstlisting} 
\end{enumerate}

To achieve this, we need to extend qualifiers from just elements of a two-point lattice, as in \citet{10.1145/301618.301665},
to formulas over lattices which can involve qualifier variables in addition to elements of the original lattice.
Moreover, we would like to relate these formulas as well.

As \citet{43df3167-5a81-387d-88d7-2d29cdf1c881} observed, there is a lattice which encodes these relations over these lattice formulas, namely, the {\it free lattice} constructed over the original qualifier lattice.  Free lattices capture exactly the lattice formulas inequalities that are true in every lattice; given two lattice formulas over a set of variables $f_1[\overline{X}] \sqsubseteq f_2[\overline{X}]$ in the free lattice, 
$f_1[\overline{X} \to \overline{L}] \sqsubseteq f_2[\overline{X} \to \overline{L}]$ in every lattice $\mathcal{L}$ and instantiation $\overline{L}$ of the variables in $\overline{X}$ to elements of $\mathcal{L}$.

It should not be surprising to see free lattices here; as \citet[Chapter 3]{Dolan} observed, free lattices can be used to model subtyping lattices with unions, intersections, and variables as well.  This allows us to generalize \citet{10.1145/301618.301665}'s recipe for qualifying types.  Instead of qualifying types by elements of the qualifier lattice, we qualify types by elements of the {\it free lattice} generated over that base qualifier lattice, and we support qualifier polymorphism explicitly with bounds following \Fsub instead
of implicitly at prenex position with constraints as \citet{10.1145/301618.301665} do.

\subsection{\Fq}

\begin{figure}
  \begin{minipage}{0.47\textwidth}
  \[
  \begin{array}[t]{rll@{\hspace{4mm}}l}\\
   s, t & ::= & & \mbox{\bf\textsf{Terms}} \\
        & | & \highlight{\fna{P}{x}{t}} & \mbox{term abstraction} \\
        & | & x & \mbox{term variable} \\
        & | & \app{s}{t} & \mbox{application} \\
        
        & | & \highlight{\Sfna{P}{X}{S}{t}} & \mbox{type abstraction} \\
        & | & \highlight{\Qfna{P}{Y}{Q}{t}} & \mbox{qualifier abstraction} \\
        & | & \Sapp{s}{S} & \mbox{type application} \\
        & | & \highlight{\Qapp{s}{Q}} & \mbox{qualifier application} \\[\medskipamount]

  \Gamma & ::= & & \mbox{\bf\textsf{Environment}} \\
         & | & \cdot & \mbox{empty} \\
         & | & \Gamma,~x : T & \mbox{term binding} \\
         & | & \Gamma,~X <: S & \mbox{type binding} \\
         & | & \Gamma,~\highlight{Y <: Q} & \mbox{qualifier binding} \\

  \end{array}
  \]
  \end{minipage}\hfill
  \begin{minipage}{0.47\textwidth}
  \[
  \begin{array}[t]{rll@{\hspace{4mm}}l}\\
   S    & ::= & & \mbox{{\bf\textsf{Simple Types}}} \\
          & | & \top & \mbox{top type} \\
          & | & \fntype{T_1}{T_2} & \mbox{function type} \\
          & | & X & \mbox{type variable} \\
          & | & \Sfntype{X}{S}{T} & \mbox{for-all type} \\
          & | & \highlight{\Qfntype{Y}{Q}{T}} & \mbox{qualifier for-all type}\\[\medskipamount]
          
   T    & ::= & & \mbox{{\bf\textsf{Qualified Types}}} \\
        & |   & \highlight{\qtyp{Q}{S}} & \mbox{qualified type} \\[\medskipamount]

   P, Q, R & ::= & & \mbox{{\bf\textsf{Qualifiers}}} \\
        & |   & \top, \bot & \mbox{Top and bottom} \\
        & |   & Y & \mbox{Qualifier variables} \\
        & |   & Q \wedge R ~|~ Q \vee R & \mbox{Meets and joins} \\[\medskipamount]

  \end{array}
  \]
  \end{minipage}

  \begin{center}
  \begin{minipage}{0.47\textwidth}
  \[
    \begin{array}[t]{rll@{\hspace{4mm}}l}                
    v   & ::= & & \mbox{\bf\textsf{Runtime Values}} \\
        & | & \fna{P}{x}{t} & \\
        & | & \Sfna{P}{X}{S}{t} \\
        & | & \Qfna{P}{Y}{Q}{t}
    \end{array}
  \]
  \end{minipage}
  \begin{minipage}{0.47\textwidth}
  \[
    \begin{array}[t]{rll@{\hspace{4mm}}l}                
    C   & ::= & & \mbox{\bf\textsf{Concrete Qualifiers}} \\
        & | & \top \mbox{ or } \bot & \mbox{two-point lattice elements}
    \end{array}
  \]
  Lattice facts reminder: $\bot  \sqsubseteq \bot$, $\bot  \sqsubseteq \top$, and $\top  \sqsubseteq \top$.  $\top \sqcap C = C$,
  $\top \sqcup C = \top$, $\bot \sqcap C = \bot$, and $\bot \sqcup C = C$.
  \end{minipage}
  \end{center}
      \caption{The syntax of \Fq. Qualified differences to \Fsub highlighted in $\highlight{\textit{grey}}$.}
  \label{fig:base:syntax}

\end{figure}

We are now ready to present our recipe by constructing \Fq, a qualified extension of \Fsub with support for type qualifiers, polymorphism over type qualifiers, as well as meets ($Q \wedge R$) and joins ($Q \vee R$) over qualifiers.  We start by constructing
a simplified version of \Fq which models a free lattice over a two-point qualifier lattice to illustrate our recipe.

\paragraph*{Assigning Qualifiers} In \Fq we qualify types with the free lattice generated over a base two-point lattice with $\top$ and $\bot$, but provide no interpretation of $\top$ and $\bot$ as \Fq is only a base calculus.

\paragraph*{Syntax}
Figure~\ref{fig:base:syntax} presents the syntax of \Fq, with additions over \Fsub highlighted in grey. Type qualifiers $Q$ not only include $\top$ and $\bot$ as they would be in \citet{10.1145/301618.301665}'s
original system.  Here, in addition we support \emph{qualifier variables} $Y$, as well as meets and joins over qualifiers.  Type variables support polymorphism over {\it unqualified} types. To support qualifier polymorphism, we add a new qualifier for-all form $\Qfntype{Y}{Q}{T}$. Similarly, on the term-level we add qualifier abstraction $\Qfna{P}{Y}{Q}{t}$ and qualifier application $\Qapp{s}{Q}$.  

To ensure that qualifiers have some runtime semantics in our base calculus, we {\it tag} values with a qualifier expression $P$ denoting the qualifier that value should be typed at and we add support for {\it asserting} as well as {\it upcasting} qualifier tags, following \citet[Section 2.2]{10.1145/301618.301665}.
While \Fq does not provide a default tag for values, negative (or contravariant) qualifiers like \qual{noexcept}
would inform a default qualifier tag choice of $\top$ -- by default, functions are assumed to throw -- and positive (or covariant) qualifiers like \qual{const} would inform a default qualifier tag choice of $\bot$ -- 
by default, in mutable languages, values should be mutable.  Put simply, the default value tag should
correspond to the default, omitted, qualifier.

\begin{figure}
  \judgement{Evaluation for \Fq}{\fbox{$s \reduces t$} and \fbox{$\eval{Q}$}}

  \begin{minipage}{0.60\textwidth}{\small
    \vspace{1.1em}
    \infax[\ruledefN{beta-v}{beta-v}]{
       \app{(\fna{P}{x}{t})}{s} \reduces t[x \mapsto s]
    }
    \vspace{1em}
    \infax[\ruledefN{beta-T}{beta-T}]{
        \Sapp{(\Sfna{P}{X}{S}{t})}{S'} \reduces t[X \mapsto S']
    }
    \vspace{1em}
    \infax[\ruledefN{beta-Q}{beta-Q}]{
        \Qapp{(\Qfna{P}{Y}{Q}{t})}{Q'} \reduces t[X \mapsto Q']
    }

    \vspace{1em}
    \infrule[\ruledefN{upqual}{upqual}]
        {v \mbox{ tagged with } P \gap \eval(P)  \sqsubseteq \eval(Q)}
        {\upqual P~v \reduces v \mbox{ retagged with } Q }

    \vspace{1em}
    \infrule[\ruledefN{assert}{assert}]
        {v \mbox{ tagged with } P \gap \eval(P)  \sqsubseteq \eval(Q)}
        {\assert P~v \reduces v}
  }
  \end{minipage}\hfill
  \begin{minipage}{0.35\textwidth}{\small
    \vspace{1em}
    \infrule[\ruledefN{context}{context}]{
        s \reduces  t
    }{
        E[s]  \reduces E[t]
    }}
    \[
    \begin{array}{lcll}
        E & ::= & \mbox{\bf\textsf{Evaluation Context}} & \\
          &  |  & [] \\
          &  |  & E(t) ~|~ v(E) \\
          &  |  & E[S]~|~E[Q] \\
          &  |  & \upqual P~E \\
          &  |  & \assert P~E
    \end{array}
    \]
    \end{minipage}

    \begin{center}
    \[
      \begin{array}[t]{rlll}\\
       \eval(Q) & ::= & &\mbox{\bf\textsf{Partial Qualifier Evaluation}} \\
             & |~C & => & C \\
             & |~Q \wedge R & => & \eval(Q) \sqcap \eval(R) \\
             & |~Q \vee R & => & \eval(Q) \sqcup \eval(R) \\
             & |~\_ & => & \mbox{nothing, otherwise.}
      \end{array}
    \]
    \end{center}
    \caption{Reduction rules for \Fq}
    \label{fig:base:evaluation}
    \end{figure}

\paragraph*{Semantics}
The evaluation rules of \Fq (defined in Figure \ref{fig:base:evaluation}) are largely unchanged from \Fsub.  To support {\it qualifier polymorphism} we add the rule \ruleref{beta-Q} for reducing applications of a qualifier abstraction to a type qualifier expression.  Finally, to ensure that qualifiers have some runtime semantics even in our base calculus we add the rules \ruleref{upqual} and \ruleref{assert} for asserting and upcasting qualifier tags: they coerce qualifier expressions to concrete qualifiers when possible and ensure that the concrete qualifiers are compatible before successfully reducing.

\paragraph*{Subqualification}

\begin{figure}
  \judgement{Subqualification for \Fq}{\fbox{$\Gamma \vdash Q \sub R$}}

  \begin{minipage}{0.45\textwidth}
  \vspace{1.1em} 
  \infax[\ruledef{sq-top}]{%
    \Gamma \vdash Q \sub \top}
\vspace{1em}
  \infax[\ruledef{sq-bot}]{%
    \Gamma \vdash \bot \sub Q}

  \vspace{1em}
  \infrule[\ruledef{sq-join-intro-1}]{%
    \Gamma \vdash Q \sub R_1
  }{%
    \Gamma \vdash Q \sub R_1 \vee R_2
  }
  \infrule[\ruledef{sq-join-intro-2}]{%
    \Gamma \vdash Q \sub R_2
  }{%
    \Gamma \vdash Q \sub R_1 \vee R_2
  }
  \vspace{1em}
  \infrule[\ruledef{sq-join-elim}]{
    \Gamma \vdash R_1 \sub Q \andalso \Gamma \vdash R_2 \sub Q
  }{
    \Gamma \vdash R_1 \vee R_2 \sub Q
  }

  \end{minipage}\hfill
  \begin{minipage}{0.45\textwidth}
  \vspace{1.1em}

  \infrule[\ruledef{sq-meet-elim-1}]{%
    \Gamma \vdash R_1 \sub Q
  }{%
    \Gamma \vdash R_1 \wedge R_2 \sub Q
  }
  \infrule[\ruledef{sq-meet-elim-2}]{%
    \Gamma \vdash R_2 \sub Q
  }{%
    \Gamma \vdash R_1 \wedge R_2 \sub Q
  }
  \vspace{1em}
  \infrule[\ruledef{sq-meet-intro}]{
    \Gamma \vdash Q \sub R_1 \andalso \Gamma \vdash Q \sub R_2
  }{
    \Gamma \vdash Q \sub R_1 \wedge R_2
  }
    \vspace{1em}
    \infrule[\ruledef{sq-var}]{
        Y \sub Q \in \Gamma  \gap  \Gamma \vdash Q \sub R
    }{
        \Gamma \vdash Y \sub R
    }
    \vspace{1em}
    \infrule[\ruledef{sq-refl-var}]{
        Y \sub Q \in \Gamma
    }{
        \Gamma \vdash Y \sub Y
    }

  \end{minipage}

\caption{Subqualification rules of \Fq.}
\label{fig:base:subcapturing}

\end{figure}

Figure \ref{fig:base:subcapturing} captures the free lattice structure of the qualifiers of \Fq with a {\it subqualification} judgment $\Gamma \vdash Q \sub R$ to make
precise the partial order between two lattice formulas in a free lattice. 
This basic structure should appear familiar---it is a simplified subtyping lattice. 
It should not be surprising that this construction gives rise to the free lattice, though we make this property explicit in 
supplementary material.
One can use this structure to deduce desirable subqualification judgments; for an environment
$\Gamma = [X \sub A, Y \sub B, A \sub \top, B \sub \top]$, we can show that $X \vee Y \sub A \vee B$, using the following rule applications:
{\small
\begin{align*}
    X <: A \vee B & \text{ by } \ruleref{sq-join-intro-1} \\
    Y <: A \vee B & \text{ by } \ruleref{sq-join-intro-2} \\
    X \vee Y <: A \vee B & \text{ by } \ruleref{sq-join-elim}
\end{align*}
}
\paragraph*{Subtyping}

\begin{figure}
  \judgement{Subtyping for \Fq}{\fbox{$\Gamma \vdash S_1 \sub S_1$ and $\Gamma \vdash T_1 \sub T_2$}}

  \begin{minipage}{0.45\textwidth}
  \vspace{1.1em} 
  \infax[\ruledef{sub-top}]{%
    \Gamma \vdash S \sub \top}

  \vspace{1em}

  \infrule[\ruledef{sub-refl-svar}]{%
    X \in \Gamma
  }{%
    \Gamma \vdash X \sub X}

  \vspace{1em}

  \infrule[\ruledef{sub-svar}]{%
    X \sub S_1 \in \Gamma \gap \Gamma \vdash S_1 \sub S_2
  }{%
    \Gamma \vdash X \sub S_2}

    \vspace {1em}
  \infrule[\ruledef{sub-qtype}]{
    \Gamma \vdash Q_1 \sub Q_2 \gap \Gamma \vdash S_1 \sub S_2
  }{
    \Gamma \vdash \qtyp{Q_1}{S_1} \sub \qtyp{Q_2}{S_2}
  }
  \vspace{1em}

  \end{minipage}\hfill
  \begin{minipage}{0.45\textwidth}
  \vspace{1.1em}
  \infrule[\ruledef{sub-arrow}]{
    \Gamma \vdash T_1 \sub T_2 \gap \Gamma \vdash T_3 \sub T_4
  }{
    \Gamma \vdash \fntype{T_1}{T_3} \sub \fntype{T_2}{T_4}
  }

  \vspace{1em}
  \infrule[\ruledef{sub-all}]{
    \Gamma \vdash S_2 \sub S_1 \gap \Gamma, X \sub S_1 \vdash T_1 \sub T_2
  }{
    \Gamma \vdash \Sfntype{X}{S_1}{T_1} \sub \Sfntype{X}{S_2}{T_2}
  }
  \vspace{1em}

  \infrule[\ruledef{sub-qall}]{
    \Gamma \vdash Q_2 \sub Q_1 \gap \Gamma, Y \sub Q_1 \vdash T_1 \sub T_2
  }{
    \Gamma \vdash \Sfntype{Y}{Q_1}{T_1} \sub \Sfntype{Y}{Q_2}{T_2}
  }

  \end{minipage}

\caption{Subtyping rules of \Fq.}
\label{fig:base:subtyping}

\end{figure}

\Fq inherits most of its rules for subtyping from \Fsub, with two changes made (Figure \ref{fig:base:subtyping}). The additional rule \ruleref{sub-qall} handles
subtyping for qualifier abstractions, and rule \ruleref{sub-qtype}
handles subtyping for qualified types.
All other rules remain unchanged, except that rules \ruleref{sub-arrow}, \ruleref{sub-all}, and \ruleref{sub-qall} are updated to operate on qualified types $T$ (instead of simple types $S$).

\paragraph*{Typing}

\begin{figure}
  \judgement{Typing for \Fq}{\fbox{$\Gamma \vdash t : T$}}

  \begin{minipage}{0.45\textwidth}
  \vspace{1.1em} 
  \infrule[\ruledef{var}]{
    x : T \in \Gamma
  }{
    \Gamma \vdash x : T
  }
  \vspace{1em}
  \infrule[\ruledef{abs}]{
    \Gamma, x : T_1 \vdash t : T_2
  }{
    \Gamma \vdash \fna{P}{x}{t} : \qtyp{\highlight{P}}{\fntype{T_1}{T_2}}
  }
  
  \vspace{1em}
  \infrule[\ruledef{t-abs}]{
    \Gamma, X \sub S \vdash t : T
  }{
    \Gamma \vdash \Sfna{P}{X}{S}{t} : \qtyp{\highlight{P}}{\Sfntype{X}{S}{T}}
  }

  \vspace{1em}
  \infrule[\ruledef{q-abs}]{
    \Gamma, X \sub S \vdash t : T
  }{
    \Gamma \vdash \Qfna{P}{Y}{Q}{t} : \qtyp{\highlight{P}}{\Sfntype{Y}{Q}{T}}
  }
  \vspace{1em}
  \infrule[\ruledef{typ-assert}]{
    \Gamma \vdash t : \qtyp{Q}{S} \gap \Gamma \vdash Q \sub P
  }{
    \Gamma \vdash \assert P~t : \qtyp{Q}{S}
  }
  \end{minipage}\hfill
  \begin{minipage}{0.45\textwidth}

  \vspace{1.1em} 
  
  \infrule[\ruledef{app}]{
    \Gamma \vdash t : \qtyp{Q}{\fntype{T_1}{T_2}} \gap \Gamma \vdash s : T_1
  }{
    \Gamma \vdash \app{t}{s} : T_2
  }
  
  \vspace{1em}
  \infrule[\ruledef{t-app}]{
    \Gamma \vdash t : \qtyp{Q}{\Sfntype{X}{S}{T}} \gap \Gamma \vdash S' \sub S 
  }{
    \Gamma \vdash \Sapp{t}{S'} : T[X \mapsto S']
  }
  
  \vspace{1em}
  \infrule[\ruledef{q-app}]{
    \Gamma \vdash t : \qtyp{R}{\Sfntype{Y}{Q}{T}} \gap \Gamma \vdash Q' \sub Q
  }{
    \Gamma \vdash \Qapp{t}{Q'} : T[Y \mapsto Q']
  }

\vspace{1em}
  \infrule[\ruledef{sub}]{
    \Gamma \vdash s : T_1 \gap \Gamma \vdash T_1 \sub T_2
  }{
    \Gamma \vdash s : T_2
  }
  \vspace{1em}
  \infrule[\ruledef{typ-upqual}]{
    \Gamma \vdash t : \qtyp{Q}{S} \gap \Gamma \vdash Q \sub P
  }{
    \Gamma \vdash \upqual P~t : \qtyp{P}{S}
  }
  \end{minipage}
    \caption{Typing rules for \Fq}
    \label{fig:base:typing}

\end{figure}

Finally, Figure \ref{fig:base:typing} defines the typing rules of \Fq. The typing judgment assigns qualified types $T$ to expressions, and can be viewed as $\Gamma \vdash t : \qtyp{Q}{S}$.  As \Fq does not assign an interpretation to qualifiers, the introduction rules for typing values, \ruleref{abs}, \ruleref{t-abs}, and \ruleref{q-abs}, simply introduce qualifiers by typing values with their tagged qualifier, and the elimination rules remain unmodified. The only (new) elimination rules which deal with qualifiers are the new rules \ruleref{typ-assert} and \ruleref{typ-upqual}, which check that their argument is properly qualified.  We additionally add \ruleref{q-abs} and \ruleref{q-app} to support qualifier polymorphism. Besides these changes, the typing rules immediately carry over from \Fsub.

\subsection{Metatheory}
\label{section:metatheory}
\Fq satisfies the standard progress and preservation theorems.

\begin{theorem}[Preservation]
Suppose $\Gamma \vdash s : T$, and $s \reduces t$. Then $\Gamma \vdash t : T$ as well.
\end{theorem}
\begin{theorem}[Progress]
Suppose $\varnothing \vdash s : T$. Then either $s$ is a value, or $s \reduces t$ for some term $t$.
\end{theorem}

While \Fq does not place any interpretation on qualifiers outside of $\upqual$ and $\assert$,
such a system can already be useful.  For one, the static type of a value will always be greater than
the tag annotated on it and this correspondence is preserved through reduction by progress and preservation.
This property can already be used to enforce safety constraints.  For example, as \citet{10.1145/301618.301665} point out, one can use a negative type qualifier \qual{sorted} to distinguish between sorted and unsorted
lists.  By default most lists would be tagged at $\top$, marking them as unsorted lists.
A function like {\tt merge}, though, which merges two sorted lists into a third sorted list,
would expect two $\bot$-tagged lists, $\assert$ that they are actually $\bot$-tagged,
and produce a $\bot$-tagged list as well.  While this scheme does not ensure that all
$\bot$-tagged lists are sorted, so long as programmers are careful to ensure that they never
construct explicitly $\bot$-tagged {\it unsorted} lists, they can ensure that functions which expect
\qual{sorted} lists are actually passed sorted lists.

\subsection{Generalizing Qualifiers to General Lattices}
\label{section:multiple}
Qualifiers often come in more complicated lattices: for example, {\it protection rings} \cite{6234805}
induce a countable lattice, and combinations of binary qualifiers induce a product lattice.
Now, we show how we can tweak the recipe used to construct \Fq for two-point lattices
to support general (countable, bounded) qualifier lattices $\mathcal{L}$ as well.


\begin{figure}
   \[
  \begin{array}[t]{rll@{\hspace{4mm}}l}\\
   P, Q, R & ::= & & \mbox{{\bf\textsf{Qualifiers in extended \Fq}}} \\
        & |   & \highlight{l} & \mbox{Base lattice elements $l \in L$} \\
        & |   & Y & \mbox{Qualifier variables} \\
        & |   & Q \wedge R ~|~ Q \vee R & \mbox{Meets and joins} \\[\medskipamount]

    C   & ::= & & \mbox{\bf\textsf{Concrete Qualifiers}} \\
        & | & \highlight{l} & \mbox{Base lattice elements $l \in L$}
    \end{array}
    \]
  \caption{The syntax of \Fq extended over a bounded lattice $\mathcal{L}$. Differences to \Fq highlighted in $\highlight{\textit{grey}}$.}
  \label{fig:multiple:syntax}

\end{figure}

\begin{figure}
  \judgement{Subqualification for \Fq over a lattice $\mathcal{L}$}{\fbox{$\Gamma \vdash Q \sub R$}}

  \begin{minipage}{0.4\textwidth}
  \vspace{1.1em} 
  \infrule[\ruledef{sq-lift}]{%
    l_1, l_2 \in \mathcal{L} \gap l_1 \sqsubseteq l_2
  }{%
    \Gamma \vdash l_1 \sub l_2
  }

  \end{minipage}\hfill
  \begin{minipage}{0.5\textwidth}
  \vspace{1.1em}

  \infrule[\ruledef{sq-eval-elim}]{
    \Gamma \vdash Q \sub Q' \gap \Gamma \vdash l = \eval{Q'} \gap \Gamma \vdash l \sub R
  }{
    \Gamma \vdash Q \sub R
  }

  \vspace{1em}
  \infrule[\ruledef{sq-eval-intro}]{
    \Gamma \vdash Q \sub l \gap \Gamma \vdash l = \eval{Q'} \gap \Gamma \vdash Q' \sub R
  }{
    \Gamma \vdash Q \sub R
  }

  \end{minipage}

\caption{Extended sub-qualification rules for \Fq.}
\label{fig:multiple:subqualification}

\end{figure}

\paragraph*{Syntax}
The syntax changes needed to support this construction are listed in Figure \ref{fig:multiple:syntax}.  
Lattice elements are now generalized from $\top$ and $\bot$ to elements $l$ from our
base lattice $\mathcal{L}$, but as $\mathcal{L}$ is bounded, note that we still have distinguished elements $\top$ and $\bot$ in $\mathcal{L}$.

\paragraph*{Subqualification}
The subqualification changes needed to support this construction are listed in Figure \ref{fig:multiple:subqualification}.  These are exactly the rules needed to support the free lattice construction
over any arbritrary countable bounded lattice.
Rule \ruleref{sq-lift} simply lifts the lattice order $\sqsubseteq$ that $\mathcal{L}$ is equipped with up to the free lattice
order defined by the subqualification lattice. Rules \ruleref{sq-eval-elim} and \ruleref{sq-eval-intro} are a little more complicated, though, but are necessary in order to relate  {\it textual meets and joins} of elements of the base lattice $\mathcal{L}$, like $l_1 \vee l_2$, to their actual meets and joins in the qualifier lattice, $l_1 \sqcup l_2$.  We would expect that these
two terms would be equivalent in the subqualification lattice; namely, that $\Gamma \vdash l_1 \vee l_2 \sub l_1 \sqcup l_2$
and that $\Gamma \vdash l_1 \sqcup l_2 \sub l_1 \vee l_2$.
However, without the two evaluation rules \ruleref{sq-eval-elim} and \ruleref{sq-eval-intro} we would only
be able to conclude that $\Gamma \vdash l_1 \vee l_2 \sub l_1 \sqcup l_2$, but not the other desired inequality
$\Gamma \vdash l_1 \sqcup l_2 \sub l_1 \vee l_2$.  

To discharge this equivalence, \ruleref{sq-eval-elim} and \ruleref{sq-eval-intro} use $\eval$ to simplify
qualifier expressions. Again, it should not be surprising that this gives rise to the free lattice of extensions of
$\mathcal{L}$, though we make this precise in supplementary material. 

\paragraph*{Soundness}
Like simple \Fq, \Fq extended over an bounded lattice $\mathcal{L}$
also satisfies the standard soundness theorems:

\begin{theorem}[Preservation for Extended \Fq]
Suppose $\Gamma \vdash s : T$, and $s \reduces t$.  Then $\Gamma \vdash t : T$ as well.
\end{theorem}
\begin{theorem}[Progress for Extended \Fq]
Suppose $\varnothing \vdash s : T$.  Either $s$ is a value, or $s \reduces t$ for some term $t$.
\end{theorem}

\noindent
However this construction while sound poses some difficulties.
The subqualification rules now need to handle transitivity through
base lattice elements, and these new rules are not syntax directed.  It remains an open question as to whether or not extended \Fq admits algorithmic subtyping rules, and we suspect the answer depends on the structure of the base bounded qualifier lattice $\mathcal{L}$ being extended.

\section{Applications}
\label{section:applications}
Having introduced our design recipe by constructing \Fq as a qualified extension of \Fsub, we now study how our subqualification and polymorphism recipe can be reused in three practical qualifier systems.  For brevity we will base our qualifier systems
on \Fq as it already provides rules and semantics for typing, subqualification and qualifier polymorphism, which we modify below.

\subsection{Reference Immutability}
We start by examining one well-studied qualifier system, that of {\it reference immutability}  \cite{10.1145/1103845.1094828,10.1145/2384616.2384680}.  
In this setting, each (heap) reference can be either mutable or immutable.
An immutable reference cannot be used to mutate the value or any other
values transitively reached from it, so a value read through a \qual{readonly}-qualified compound object
or reference is itself \qual{readonly} as well. Mutable and immutable references can coexist for the same value, so an immutable reference does not itself guarantee that the value will not change through some other, mutable reference. This is in contrast to the stronger guarantee of {\it object immutability}, which applies to values, and ensures that a particular value does not change through any of the references to it~ \cite{10.1145/1287624.1287637}.

Reference immutability systems have long been studied in various contexts \cite{10.1145/1103845.1094828,10.1145/2384616.2384680,10.1145/1287624.1287637,10.1145/2384616.2384619,Lee2023ToAppear,dort_et_al:LIPIcs:2020:13175}.  Here, we show that we can reuse our recipe to model reference immutability in a setting with higher rank polymorphism and subtyping over both qualifiers and ground types, in a calculus \Fm. 

\paragraph*{Assigning Qualifiers}  We need to define how qualifiers \qual{mutable} and \qual{readonly} are assigned to $\top$ and $\bot$ in  \Fm. Since a \qual{mutable} reference can always be used where a \qual{readonly} reference is expected, we assign \qual{mutable} to $\bot$ and \qual{readonly} to $\top$.  This is reflected in Figure \ref{fig:mut:syntax}.

\paragraph*{Syntax and Evaluation}
Now we need to design syntax and reduction rules for references and immutable references.  We add support for references via $\Boxed$ forms and we add rules for introducing and eliminating boxes.  To distinguish between mutable and immutable boxes, we
reuse the qualifiers tagged on values--values with tags $P$ that $\eval$ to $\bot$ are mutable, whereas values with tags $P$ that otherwise evaluate to $\top$ are {\it mutable}.  One can explicitly mark a value immutable by $\upqual$-ing to $\top$.    The elimination form for reading from a reference, \ruleref{deref}, ensures
that a value read from a reference tagged \qual{immutable}, or at $\top$, remains \qual{immutable}.  This is reflected in the updated operational semantics (Figure \ref{fig:mut:evaluation}).  Reduction now takes place over pairs of terms and stores $\langle t, \sigma \rangle$; stores map locations $l$ to values.

\begin{figure}
  \begin{minipage}{0.47\textwidth}
  \[
  \begin{array}[t]{rll@{\hspace{4mm}}l}\\
   s, t & ::= & & \mbox{\bf\textsf{Terms}} \\
        & \hdots \\
        & | & \Boxed_P t & \mbox{reference cell} \\
        & | & \unboxed s & \mbox{deferencing} \\
        & | & \setboxed{s}{t} & \mbox{reference update}
  \end{array}
  \]
  \end{minipage}\hfill
  \begin{minipage}{0.47\textwidth}
  \[
  \begin{array}[t]{rll@{\hspace{4mm}}l}\\
   S    & ::= & & \mbox{{\bf\textsf{Types}}} \\
          & \hdots \\
          & | & \Boxed S& \mbox{reference type} \\[\medskipamount]
          
  P, Q, R  & ::= & & \mbox{\bf\textsf{Qualifiers}} \\
        & \hdots && \mbox{as before, except:} \\
        & |   & \readonly & \mbox{const qualifier (as $\top$)} \\
        & |   & \mutable & \mbox{non-const qualifier (as $\bot$)} \\[\medskipamount]
  \end{array}
  \]
  \end{minipage}
  \begin{minipage}{0.47\textwidth}
  \[
    \begin{array}[t]{rll@{\hspace{4mm}}l}
           &  & l & \mbox{\bf\textsf{Location}} \\[\medskipamount]
     s, t & ::= & & \mbox{\bf\textsf{Runtime Terms}} \\
          & |   & \Boxed_P l & \mbox{runtime reference} \\[\medskipamount]
                
    v   & ::= & & \mbox{\bf\textsf{Runtime Values}} \\
        & \hdots \\
        & | & \Boxed_P l
    \end{array}
  \]
  \end{minipage}\hfill
  \begin{minipage}[c]{0.47\textwidth}
  \[
    \begin{array}[t]{rll@{\hspace{4mm}}l}
     \sigma & ::= & & \mbox{\bf\textsf{Store}} \\
            & |   & \cdot & \mbox{empty} \\
            & |   & \sigma,~l : v & \mbox{cell $l$ with value $v$} \\[\medskipamount]

     \Sigma & ::= & & \mbox{\bf\textsf{Store Environment}} \\
            & |   & \cdot & \mbox{empty} \\
            & |   & \sigma,~l : T & \mbox{cell binding}
    \end{array}
  \]
  \end{minipage}
  \caption{The syntax of \Fm.}
  \label{fig:mut:syntax}

\end{figure}

\begin{figure}
  \judgement{Additional Evaluation Rules for \Fm}{\fbox{$\langle s, \sigma\rangle \reduces \langle t, \sigma' \rangle$}}

  \begin{minipage}{0.45\textwidth}{\small
    \vspace{1.1em}
    \infrule[\ruledefN{ref-store}{ref-store}]{
        l \notin \sigma
    }{
        \langle \Boxed_P v, \sigma \rangle \reduces \langle \Boxed_P l, (\sigma, l : v) \rangle 
    }
    \vspace{1em}
    \infrule[\ruledefN{deref}{deref}]{
        l : v \in \sigma \gap v \mbox{ tagged with } Q
    }{
        \langle \unboxed \Boxed_P l, \sigma \rangle \reduces  \langle v \mbox{ retagged at } P \vee Q, \sigma \rangle
    }
  }
    
  \end{minipage}\hfill
  \begin{minipage}{0.45\textwidth}{\small
    \vspace{1em}
    \infrule[\ruledefN{write-ref}{write-ref}]{
        l : v \in \sigma \gap \eval(P) \sqsubseteq \bot
    }{
        \langle \setboxed (\Boxed_P~l~)~v', \sigma \rangle \mapsto \langle v, \sigma[l \mapsto v'] \rangle
    }
    \vspace{1em}
    \infrule[\ruledefN{context}{context}]{
        \langle s, \sigma \rangle \reduces \langle t, \sigma' \rangle
    }{
        \langle E[s], \sigma \rangle \reduces \langle E[t], \sigma' \rangle
    }}
    \end{minipage}

    \begin{center}
    \[
    \begin{array}{lcll}
        E & ::= & \hdots & \mbox{{\bf Evaluation Context}}\\
            & |   & \Boxed_P E \\
            & |   & \unboxed E \\
            & |   & \setboxed E~t~|~\setboxed v~E    \end{array}
    \]
    \end{center}

    \caption{Reduction rules for \Fm}
    \label{fig:mut:evaluation}

    \end{figure}

\paragraph*{Typing}

\begin{figure}
  \judgement{Additional Typing and Runtime Typing for \Fm}{\fbox{$\Gamma~|~\Sigma \vdash t : T$ and $\Gamma~|~\highlight{\Sigma} \vdash \sigma$}}

  \begin{minipage}{0.5\textwidth}
  \vspace{1.1em} 
  \infrule[\ruledef{ref-intro}]{
    \Gamma~|~\Sigma \vdash t : T
  }{
    \Gamma~|~\Sigma \vdash \Boxed_{P} t : \qtyp{P}{\Boxed T}
  }
  \vspace{1em}
  \infrule[\ruledef{runtime-ref-intro}]{
    l : T \in \Sigma
  }{
    \Gamma~|~\Sigma \vdash \Boxed_{P} l : \qtyp{P}{\Boxed T}
  }
  \end{minipage}\hfill
  \begin{minipage}{0.5\textwidth}

  \vspace{1.1em} 

  \vspace{1em}
  \infrule[\ruledef{ref-elim}]{
    \Gamma~|~\Sigma \vdash t : \qtyp{Q_1}{\Boxed \qtyp{Q_2}{S}}
  }{
    \Gamma~|~\Sigma \vdash \unboxed t : \qtyp{\highlight{Q_1 \vee Q_2}}{S}
  }
  
  \vspace{1em}
  \infrule[\ruledef{ref-update}]{
    \Gamma \vdash s : \qtyp{\highlight{\mutable}}{\Boxed T} \gap \Gamma \vdash t : T
  }{
    \Gamma \vdash \setboxed s~t : T
  }
  \end{minipage}

    \begin{center}
      \begin{minipage}{0.75\textwidth}      
       \vspace{1em}

      \infrule[\ruledef{store}]{
        dom(\sigma) = dom(\Sigma) \gap \forall l \in dom(\Sigma),~\Gamma~|~\Sigma \vdash \sigma(l) : \Sigma(l)
      }{
        \Gamma~|~\Sigma \vdash \sigma
      }
      \end{minipage}
    \end{center}
    \caption{Typing rules for \Fm; notable changes highlighted in {\it grey}.}
    \label{fig:mut:typing}

\end{figure}

 We now need to define new typing rules for reference forms and to possibly adjust existing typing rules to account for our new runtime interpretation of qualifiers.  For this system, we only need to add typing rules, as shown in Figure~\ref{fig:mut:typing}. To ensure immutability safety, the standard reference update elimination form
 \ruleref{ref-update} is augmented to check that a reference can only be written to if and only if it can be typed as \qual{mutable} $\Boxed{}$. Finally, the standard reference read elimination form \ruleref{ref-elim} is augmented to enforce that the mutability of the value read from a reference is joined with the mutability of the reference itself to ensure transitive immutability safety.  Other than qualifiers, our construction is completely standard; we merely add a store $\sigma$ and a runtime store environment $\Sigma$ mapping store locations to types.  

\paragraph*{Metatheory}
We can prove the standard soundness theorems without any special difficulty:
\begin{theorem}[Preservation of \Fm]
    \label{theorem:mut:preservation}
    Suppose $\langle s, \sigma \rangle \reduces \langle t, \sigma' \rangle$. 
    If $\Gamma~|~\Sigma \vdash \sigma$ and $\Gamma~|~\Sigma \vdash s : T$ for some type $T$,
    then there is some environment extension $\Sigma'$ of $\Sigma$ such that $\Gamma~|~\Sigma' \vdash \sigma'$ and
    $\Gamma~|~\Sigma' \vdash t : T$.
\end{theorem}
\begin{theorem}[Progress for \Fm]
    \label{theorem:mut:progress}
    Suppose $\varnothing~|~\Sigma \vdash \sigma$ and $\varnothing, \Sigma \vdash s : T$.  Then either
    $s$ is a value or there is some $t$ and $\sigma'$ such that $\langle s, \sigma \rangle \reduces \langle t, \sigma' \rangle$.
\end{theorem}

\noindent With only progress and preservation, we can already state something meaningful about the immutability safety of \Fm: we know
that well-typed programs will not get stuck trying to write to a sealed,$\bot$-tagged reference.  Moreover, the typing rules, in
particular \ruleref{ref-elim}, give us our desired transitive immutability safety as well; values read from a
$\bot$-tagged value will remain $\bot$-tagged and therefore immutable as well.  In addition, as qualifier tags
only affect reduction by blocking reduction (that is, getting stuck) we almost directly recover full immutability
safety as well for free, by noting that references typed (by subtyping) at \qual{readonly} can be re-tagged
at \qual{readonly} as well without affecting reduction, assuming the original program was well-typed.

\subsection{Function Colouring}
Function colouring \cite{Nystrom_2015} is another qualifier system. In this setting, functions are qualified
with a kind that indicates a {\it colour} for each function, and there are restrictions on which other functions a function
can call depending on the colours of the callee and caller.  For example, \qual{noexcept} and \qual{throws} forms a function colouring system---functions qualified \qual{noexcept} can only call functions qualified \qual{noexcept}.  Another instantiation of this problem
is the use of the qualifiers \qual{sync} and \qual{async} in asynchronous programming. 
\qual{async}-qualified functions may call all functions but \qual{sync}-qualified functions may only call other \qual{sync}-qualified functions.
Polymorphism with function colours is known to be painful \cite{Nystrom_2015}.  Consider a higher-order function {\tt map}:
\begin{lstlisting}[language=Scala]
    def map[X, Y](l: List[X], f: (X => Y)) = ???
\end{lstlisting}
What should its colour be? The colour of a function like {\tt map} depends on the function {\tt f} it is applying. Without a mechanism to express this dependency, such as colour polymorphism, functions like {\tt map} need to be implemented twice---once for an \qual{async}-qualified {\tt f}, and once for a \qual{sync}-qualified {\tt f}.
Moreover, function colouring requires a mechanism for mixing colours!  Consider function composition:
\begin{lstlisting}[language=Scala]
    def compose[A, B, C, D](f: A => B, g: C => D) = (x) => g(f(x))  
\end{lstlisting}
The colour of the result of {\tt compose} needs to be the {\it join} of the colours of {\tt f} and {\tt g}.  If either
{\tt f} or {\tt g} are asynchronous then the result of {\tt compose} is as well, but if both {\tt f} and {\tt g}
are synchronous then so should the result of composing them. 
We now show how our recipe can be used to construct \Fa, a calculus that enforces these restrictions.

\paragraph*{Assigning Qualifiers} Since a synchronous function can be called anywhere that an asynchronous function could be, we assign the $\top$ qualifier  to \qual{async} and the $\bot$ qualifier to \qual{sync}.

\paragraph*{Syntax}

\begin{figure}
  \begin{minipage}{0.5\textwidth}
  \[
  \begin{array}[t]{rll@{\hspace{4mm}}l}\\
  P, Q, R  & ::= & & \mbox{\bf\textsf{Qualifiers}} \\
        & \hdots && \mbox{as before, except:} \\
        & |   & \async ~(\mbox{as } \top) & \mbox{async qualifier} \\
        & |   & \sync ~(\mbox{as } \bot) & \mbox{sync qualifier}\\[\medskipamount]
        
    \kappa & ::= & & \mbox{\bf\textsf{Evaluation Context}} \\
           & | & [] \\ 
           & | & f :: \kappa \\[\medskipamount]
  \end{array}
  \]
  \end{minipage}
  \begin{minipage}{0.45\textwidth}
  \[
  \begin{array}[t]{rll@{\hspace{4mm}}l}\\
    f & ::= & & \mbox{\bf\textsf{Evaluation Frames}} \\
           & | & \highlight{\kbarrier C} & \mbox{barrier} \\
           & | & \karg t & \mbox{argument} \\
           & | & \kapp v & \mbox{application} \\
           & | & \ktarg T & \mbox{type application} \\
           & | & \kqarg Q & \mbox{qualifier application} \\[\medskipamount]


  \end{array}
  \]
  \end{minipage}
  \caption{The syntax of \Fa.}
  \label{fig:async:syntax}

\end{figure}

Figure~\ref{fig:async:syntax} presents the modified syntax of \Fa.
To keep track of the synchronicity a function
term should run in we reuse the tags already preset in values. An example of an asynchronous function term
is $\fna{\tt{async}}{x}{\;x}$, and an example of a function that is polymorphic in its qualifier is 
$\Qfna{\tt{async}}{Y}{\tt{sync}}{\fna{Y}{f}{\;f(1)}}$, describing a function that should run in the same synchronicity context as its argument $f$.

\paragraph*{Evaluation}

\begin{figure}
  \judgement{Evaluation for \Fa}{\fbox{$\cks{c}{\kappa} \reduces \cks{c'}{\kappa'}$}}

  \begin{minipage}{0.45\textwidth}{\footnotesize
    \vspace{1.1em}
    \infax[\ruledefN{cong-app}{cong-app}]{
       \cks{\app{s}{t}}{\kappa} \reduces \cks{s}{\karg t :: \kappa}
    }
    \vspace{1em}
    \infax[\ruledefN{cong-arg}{cong-arg}]{
        \cks{v}{\kapp t :: \kappa} \reduces \cks{t}{\kapp v :: \kappa}
    }
    \vspace{1em}
    \infax[\ruledefN{cong-tapp}{cong-tapp}]{
       \cks{\Sapp{s}{S}}{\kappa} \reduces \cks{s}{\ktarg S :: \kappa}
    }
    \vspace{1em}
    \infax[\ruledefN{cong-qapp}{cong-qapp}]{
       \cks{\Qapp{s}{Q}}{\kappa} \reduces \cks{s}{\kqarg Q :: \kappa}
    }
    \vspace{1em{}}
    \infax[\ruledefN{break-barrier}{break-barrier}]{
        \cks{v}{\kbarrier C :: \kappa} \reduces \cks{v}{\kappa}
    }
    \vspace{1em}}
  \end{minipage}\hfill
  \begin{minipage}{0.55\textwidth}{\footnotesize
    \vspace{1em}
    \infrule[\ruledefN{reduce-app}{reduce-app}]{        
        C \leq C_i \text{ for all } \kbarrier~C_i \text{ frames on } \kappa \gap \eval{P} = C
    }{
        \cks{v}{\kapp \fna{P}{x}{t} :: \kappa} \reduces \cks{t[x \mapsto v]}{\kbarrier C :: \kappa}
    }
    \vspace{1em}
    \infrule[\ruledefN{reduce-tapp}{reduce-tapp}]{        
        C \leq C_i \text{ for all } \kbarrier~C_i \text{ frames on } \kappa \gap \eval{P} = C
    }{
        \cks{\Sfna{P}{X}{S}{t}}{\ktarg S' :: \kappa} \reduces \cks{t[X \mapsto S']}{\kbarrier C :: \kappa}
    }
    \vspace{1em}
    \infrule[\ruledefN{reduce-qapp}{reduce-qapp}]{        
         C \leq C_i \text{ for all } \kbarrier~C_i \text{ frames on } \kappa\gap \eval{P} = C
    }{
    \cks{\Qfna{P}{Y}{Q}{t}}{\kqarg Q' :: \kappa} \reduces \cks{t[Y \mapsto Q']}{\kbarrier C :: \kappa}
    }
    }
    \end{minipage}

    \caption{Operational Semantics (CK-style) for \Fa}
    \label{fig:async:evaluation}

    \end{figure}

To model synchronicity safety, Figure~\ref{fig:async:evaluation} describes the operational semantics of \Fa using \citet{10.1145/41625.41654}-style CK semantics, extended with special {\it barrier} frames installed on the stack denoting the colour of the function that was called.  When a function is called,
we  place a {\it barrier} with the evaluated colour of the function itself, and functions
may only be called if the barriers on the stack are {\it compatible} with the evaluated colour of the function being called---namely, an asynchronous function can be called only if there are no barriers on the stack marked synchronous.  The other evaluation contexts are standard.

\paragraph*{Typing}
To guarantee soundness, Figure~\ref{fig:async:typing} endows the typing rules of \Fa with modified rules for keeping track of the synchronicity context
that a function needs.  We extend the typing rules with a colour context $R$ to keep track of the synchronicity of the functions being called.  This colour context $R$ is simply a qualifier expression, and is introduced
by the introduction rules for typing abstractions by lifting the qualifier tagged on those abstractions -- see  rules \ruleref{A-abs}, \ruleref{A-t-abs}, and \ruleref{A-q-abs}.  To ensure safety when applying functions in the elimination \ruleref{A-app}, we check that the colour context is compatible with the type of the function being called; subsumption in \ruleref{A-sub-eff} allows functions to run if the
qualifiers do not exactly match but when the qualifier on the function is subqualified by the colour context.  The typing rules outside of manipulating the context $R$ remain otherwise unchanged.

\paragraph*{Metatheory}
With all this, we can state and prove progress and preservation for \Fa.
\begin{theorem}[Progress of \Fa]
Suppose $\langle c, \kappa \rangle$ is a well-typed machine configuration.  Then either $c$ is a value and $k$
is the empty continuation, or there is a machine state $\langle c', \kappa'\rangle$ that it steps to.
\end{theorem}

\begin{theorem}[Preservation of \Fa]
Suppose $\langle c, \kappa \rangle$ is a well-typed machine configuration.  Then if it steps to
another configuration $\langle c', \kappa' \rangle$, that configuration is also well typed.
\end{theorem}

\begin{figure}
  \judgement{Typing for \Fa}{\fbox{$\Gamma \highlight{~|~ R} \vdash s : T$}}

  \begin{minipage}{0.5\textwidth}
  \small
  \vspace{1.1em} 
  \infrule[\ruledef{A-var}]{
    x : T \in \Gamma
  }{
    \Gamma~|~R \vdash x : T
  }
  \vspace{1em}
  \infrule[\ruledef{A-abs}]{
    \Gamma, x : T_1\highlight{~|~P} \vdash t : T_2
  }{
    \Gamma~\highlight{~|~\sync} \vdash \fna{P}{x}{t} : \qtyp{P}{\fntype{T_1}{T_2}}
  }
  
  \vspace{1em}
  \infrule[\ruledef{A-t-abs}]{
    \Gamma, X \sub S\highlight{~|~P} \vdash t : T
  }{
    \Gamma\highlight{~|~\sync} \vdash \Sfna{P}{X}{S}{t} : \qtyp{P}{\Sfntype{X}{S}{T}}
  }

  \vspace{1em}
  \infrule[\ruledef{A-q-abs}]{
    \Gamma, Y \sub Q \highlight{~|~P} \vdash t : T
  }{
    \Gamma\highlight{~|~\sync} \vdash \Qfna{P}{Y}{Q}{t} : \qtyp{P}{\Sfntype{Y}{Q}{T}}
  }

  \end{minipage}\hfill
  \begin{minipage}{0.45\textwidth}
  \small
  \vspace{1.1em} 
  
  \infrule[\ruledef{A-app}]{
    \Gamma\highlight{~|~R} \vdash t : \qtyp{\highlight{R}}{\fntype{T_1}{T_2}} \gap \Gamma \vdash s : T_1
  }{
    \Gamma\highlight{~|~R} \vdash \app{t}{s} : T_2
  }
  
  \vspace{1em}
  \infrule[\ruledef{A-t-app}]{
    \Gamma\highlight{~|~R} \vdash t : \qtyp{\highlight{R}}{\Sfntype{X}{S}{T}} \gap \Gamma \vdash S' \sub S 
  }{
    \Gamma\highlight{~|~R} \vdash \Sapp{t}{S'} : T[X \mapsto S']
  }
  
  \vspace{1em}
  \infrule[\ruledef{A-q-app}]{
    \Gamma\highlight{~|~R} \vdash t : \qtyp{\highlight{R}}{\Sfntype{Y}{Q}{T}} \gap \Gamma \vdash Q' \sub Q
  }{
    \Gamma\highlight{~|~R} \vdash \Qapp{t}{Q'} : T[Y \mapsto Q']
  }

\vspace{1em}
  \infrule[\ruledef{A-sub}]{
    \Gamma~|~R \vdash s : T_1 \gap \Gamma \vdash T_1 \sub T_2
  }{
    \Gamma~|~R \vdash s : T_2
  }

\vspace{1em}
  \infrule[\ruledef{A-sub-eff}]{
    \Gamma~|~R \vdash s : T_1 \gap \Gamma \vdash R \sub Q
  }{
    \Gamma~|~Q \vdash s : T_2
  }
  \end{minipage}
    \caption{Typing rules for \Fa}
    \label{fig:async:typing}

\end{figure}

\noindent Note that progress and preservation guarantee meaningful safety properties about \Fa, namely that
an asynchronous function is never called above a synchronous function during evaluation,
as such a call would get stuck, by \ruleref{reduce-app}.

\paragraph*{Observations} \Fa can be used to model function colouring with other qualifiers as well;
for example, we could model colours \qual{noexcept} and \qual{throws} by assigning \qual{noexcept}
to $\bot$ and \qual{throws} to $\top$.  More interestingly \Fa could be viewed as a simple {\it effect system}; the synchronicity context $R$ can be seen as the effect of a term!  We discuss this curious connection between qualifiers and effects in Section \ref{section:related:effects}.

\subsection{Tracking Capture}
Finally, our design recipe can be remixed to construct a qualifier system 
to qualify values based on what they capture. Some base values are {\it meaningful} and should be \qual{tracked}, 
and other values are {\it forgettable}. 

\paragraph*{Motivation}
One application of such a system is the {\it effects-as-capabilities} discipline \cite{10.1145/365230.365252}, which enables reasoning about which code can perform side effects by simply tracking capabilities, special values that grant the holder the ability to perform side effects; for example, the ability to perform I/O, or the ability to throw an exception.

\paragraph*{What to track?} Suppose for example we have a base capability named {\tt one\_ring}, which allows its holder to produce arbitrary values.  Such a precious value really ought to be \qual{tracked} and not forgotten, as in the hands of the wrong user, it can perform dangerous side effects!
\begin{lstlisting}[language=Scala]
    val one_ring : {tracked} [A] (Unit => A) = ???
\end{lstlisting}

\noindent However, it is not only {\tt one\_ring} itself that is dangerous.
Actors that {\it capture} {\tt one\_ring} can themselves cause dangerous side effects.
For example:
\begin{lstlisting}[language=Scala]
    def fifty_fifty(): Unit = {
        val gauntlet = one_ring[InfinityGauntlet]()
        gauntlet.snap()
    } // one_ring is captured by fifty_fifty.
\end{lstlisting}
In general, values that capture {\it meaningful} values---capabilities---become {\it meaningful} themselves, since they can perform side effects, so they should also be \qual{tracked}. 
Now, while it is clear that {\tt one\_ring} and {\tt fifty\_fifty} are both dangerous, they are dangerous for different reasons: {\tt one\_ring} because it intrinsically is and {\tt fifty\_fifty} because it captures {\tt one\_ring}. 

\paragraph*{Distinguishing Capabilities}
In practical applications, we may wish to distinguish between different effects, modelled by
different capabilities. For example, we may wish to reason about a more pedestrian side effect -- printing --
separately from the great evil that {\tt one\_ring} can perform.  It is reasonable to expect
that we can {\tt print} in more contexts than we can use the {\tt one\_ring}.

\begin{lstlisting}[language=Scala]
    val print : {tracked} String => Unit = ???
    def hello_world() = print "Hello World!"  // tracked as it captures print
    def runCodeThatCanPrint(f: ??? () => Unit) = f()
    runCodeThatCanPrint(hello_world) // OK
    runCodeThatCanPrint(fifty_fifty) // Should be forbidden
\end{lstlisting}
\noindent In this example, function {\tt runCodeThatCanPrint} only accepts thunks that print as a side effect.  What type annotation should we give to its argument {\tt f}? 
In particular, what qualifier should we use to fill in the blank? 
It should not be \qual{tracked}, as otherwise we could pass {\tt fifty\_fifty} to {\tt runCodeThatCanPrint} -- an operation which should be disallowed.  
Instead we would like to fill that blank with \qual{print}; to denote that {\tt runCodeThatCanPrint} can accept any thunk which is no more dangerous than {\tt print} itself. 
Figure~\ref{figure:qual:capture:example} summarizes the different variables in the above examples and the qualifiers we would like to assign to their types.

\begin{figure}
    \begin{center}
    \begin{tabular}{m{3cm} m{3cm} m{6cm}}
     {\bfseries\sffamily Term} & {\bfseries\sffamily Qualifier} & {\bfseries\sffamily Reason} \\ \toprule
     {\tt one\_ring} & \qual{tracked} & As {\tt one\_ring} is a base capability. \\ \midrule
     {\tt print} & \qual{tracked} & As {\tt print} is a base capability. \\ \midrule
     {\tt fifty\_fifty} & \qual{\bfseries one\_ring} & As {\tt fifty\_fifty} is no more dangerous than {\tt one\_ring}. \\ \midrule
     {\tt hello\_world} & \qual{\bfseries print} & As {\tt hello\_world} is no more
     dangerous than {\tt print}. \\
    \end{tabular}
    \end{center}        
    \caption{Qualifier assignments in Capture Tracking}
    \label{figure:qual:capture:example}
\end{figure}

As \citet{10.1145/3486610.3486893,boruchgruszecki2021tracking, 10.1145/3618003} show, such a capture tracking system could be used to guarantee desirable and important safety invariants.
They model capture tracking using sets of variables, but a set is just a lattice join of the singletons in that set!  For example, \citet{10.1145/3618003} would give the following {\tt evil\_monologue} function the capture set annotation {\tt \{fifty\_fifty, print\}}, while we would
give it the qualifier annotation {\tt \qual{\{fifty\_fifty | print\}}}.
\begin{lstlisting}[language=Scala]
    def evil_monologue(): Unit = {
        print "I expect you to die, Mr. Bond."
        fifty_fifty()
    } 
\end{lstlisting}

\noindent Using this insight, we can model capture tracking as an extension \Fc of \Fq.

\begin{figure}
  \begin{minipage}{0.45\textwidth}
  \[
  \begin{array}[t]{rll@{\hspace{4mm}}l}\\
   s, t & ::= & & \mbox{\bf\textsf{Terms}} \\
        & \hdots \\
        & | & \capp{s}{Q}{t} & \mbox{term application} \\[\medskipamount]
     S  & ::= & & \mbox{\bf\textsf{Types}} \\
        & \hdots \\
        & | & \cfntype{x}{T_1}{T_2} & \mbox{function type} \\[\medskipamount]
  P, Q, R  & ::= & & \mbox{\bf\textsf{Qualifiers}} \\
        & \hdots && \mbox{as before, except:} \\
        & |   & x & \mbox{term variables} \\
        & |   & \tracked ~(\mbox{as } \top) & \mbox{tracked values} \\
        

  \end{array}
  \]
  \end{minipage}
  \begin{minipage}{0.45\textwidth}
      \judgement{}{\bf\textsf{Evaluation: }\fbox{$s \reduces t$}}
      {\small
        \infax[\ruledefN{C-beta-v}{C-beta-v}]{
           \capp{(\fna{P}{x}{t})}{\highlight{Q}}{s} \reduces \\ t[x \mapstoty Q][x \mapstote s]
        }}

      \judgement{}{\bf\textsf{Subqualification: }\fbox{$\Gamma \vdash Q \sub R$}}{
      \small
        \infrule[\ruledef{sq-tvar}]{
            x : \qtyp{Q}{S} \in \Gamma  \gap  \Gamma \vdash Q \sub R
        }{
            \Gamma \vdash x \sub R
        }
        \infrule[\ruledef{sq-refl-tvar}]{
            x : \qtyp{Q}{S} \in \Gamma
        }{
            \Gamma \vdash x \sub x
        }
      }
      
  \end{minipage}

  \begin{center}
     \begin{minipage}{0.45\textwidth}
      \judgement{{\bf\textsf{Subtyping: }}}\fbox{$\Gamma \vdash S_1 \sub S_2$}{
      \small
          \infrule[\ruledef{C-sub-arrow}]{
            \Gamma \vdash T_1 \sub T_2 \gap \Gamma, x : T_1 \vdash T_3 \sub T_4
          }{
            \Gamma \vdash \cfntype{x}{T_2}{T_3} \sub \cfntype{x}{T_1}{T_4}
          }
          }
      \end{minipage}
  \end{center}

  \caption{Evaluation, Syntax, Subtyping for \Fc}
  \label{fig:capt:comb}
\end{figure}

\paragraph*{Assigning Qualifiers} 
We attach a qualifier \qual{tracked} to types, denoting which values we should keep track of. 
The qualifier \qual{tracked} induces a two-point lattice, where \qual{tracked} is at $\top$, and values that should
not be tracked, or should be {\it forgotten}, are qualified at $\bot$. 
Base capabilities will be given the \qual{tracked} qualifier.

\paragraph*{Syntax -- Tracking Variables}  
Figure~\ref{fig:capt:comb} defines the syntax of \Fc.  
'To reflect the underlying term-variable-based nature of capture tracking, term bindings in \Fc introduce both a term variable in term position as well as a qualifier variable in qualifier position with the same name as the term variable.

Term bindings now serve double duty introducing both term variables and qualifier variables, so a term like the monomorphic identity function $\fna{\bot}{x}{x}$
would be given the type $\qtyp{\bot}{\cfntype{x}{\qtyp{Q}{S}}{\qtyp{x}{S}}}$ to indicate that it is not tracked but the result might be tracked depending on whether or not its argument $x$ is tracked as well.  
This still induces a {\it free lattice} structure generated over the two-point lattice that \qual{tracked} induces, except in this case, the free lattice includes both
qualifier variables introduced by qualifier binders in addition to qualifier variables introduced by term binders as well. 
As term binders introduce both a term and qualifier variable, term application in \Fc now requires a qualifier argument
to be substituted for that variable in qualifier position.  
As such, term application in \Fc now has three arguments $\capp{s}{Q}{t}$ -- a function $s$, a qualifier $Q$, and an argument $t$; see Figure \ref{fig:capt:comb}.
In this sense, term abstractions in \Fc can be viewed as a combination of a qualifier abstraction $\Lambda[x <: Q]$ followed by a term abstraction $\lambda(x : \qtyp{x}{T})$.

\begin{figure}
  \judgement{Typing for \Fc}{\fbox{$\Gamma \vdash t : T$}}

  \begin{minipage}{0.5\textwidth}
  \vspace{1.1em} 

      \infrule[\ruledef{C-var}]{
        x : \qtyp{Q}{S} \in \Gamma
      }{
        \Gamma \vdash x : \qtyp{x}{S}
      }

  \vspace{1.1em}
      \infrule[\ruledef{C-app}]{
        \Gamma \vdash s : \cfntype{x}{\qtyp{Q}{S}}{T} \gap \Gamma \vdash Q' \sub Q  \\ \Gamma \vdash t : \qtyp{Q'}{S}
      }{
        \Gamma \vdash \capp{s}{Q'}{t} : T[x \mapstoty Q']
      }
  \end{minipage}\hfill
  \begin{minipage}{0.5\textwidth}
      \vspace{1em}
      \infrule[\ruledef{C-abs}]{
        \Gamma, x : T_1 \vdash t : T_2 \gap \highlight{\Gamma \vdash \vee_{y \in \fv(t) - x}~y \sub P}
      }{
        \Gamma \vdash \fna{P}{x}{t} : \qtyp{P}{\cfntype{x}{T_1}{T_2}}
      }
      
      \vspace{1em}
      \infrule[\ruledef{C-t-abs}]{
        \Gamma, X \sub S \vdash t : T \gap \highlight{\Gamma \vdash \vee_{y \in \fv(t)}~y \sub P}
      }{
        \Gamma \vdash \Sfna{P}{X}{S}{t} : \qtyp{P}{\Sfntype{X}{S}{T}}
      }
    
      \vspace{1em}
      \infrule[\ruledef{C-q-abs}]{
        \Gamma, X \sub S \vdash t : T \gap \highlight{\Gamma \vdash \vee_{y \in \fv(t)}~y \sub P}
      }{
        \Gamma \vdash \Qfna{P}{Y}{Q}{t} : \qtyp{P}{\Sfntype{Y}{Q}{T}}
      }
    
  \end{minipage}
  
    \caption{Typing rules for \Fc}
    \label{fig:capt:typing}

\end{figure}

\paragraph*{Subqualification} One essential change is that we need to adjust subqualification to account for qualifier variables bound by term binders in addition to qualifier variables bound by
qualifier binders.  These changes are the addition of two new rules, \ruleref{sq-refl-tvar} and \ruleref{sq-tvar}.  Rule \ruleref{sq-refl-tvar} accounts for reflexivity in \Fc's
adjusted subqualification judgment.
\ruleref{sq-tvar} accounts for subqualification for qualifier variables bound by term binders, and formalizes this notion of {\it less dangerous} we discussed earlier---that {\tt fifty\_fifty} can be used in a context that allows the use of {\tt one\_ring}, and that {\tt hello\_world} can be used in a context that allows the use of {\tt print}.  Interestingly, it is just a close duplicate of the existing subqualification rule for qualifier variables, \ruleref{sq-var}!

    {\small
    \infrule[]{
        \qual{fifty\_fifty}: \qual{one\_ring}~{\color{lightgray}\tt Unit => Unit}\in \Gamma  \gap  \Gamma \vdash \qual{one\_ring} \sub  \qual{one\_ring}
    }{
        \Gamma \vdash \qual{fifty\_fifty} \sub \qual{one\_ring}
    }}

\paragraph*{Subtyping}
    As function binders introduce a qualifier
    variable, so do function types as well; for example, $x$ in $\cfntype{x}{\qtyp{Q}{S}}{\qtyp{x}{S}}$.  Subtyping needs to account for this bound qualifier variable; see \ruleref{C-sub-arrow}.
    
\paragraph*{Typing}
Values are now qualified with the free variables that they close over (i.e., that they capture).  
To ensure this is faithfully reflected in the value itself, we check that the tag on the value super-qualifies the free variables that value captures. 
This is reflected in the modified typing rules for typing abstractions: \ruleref{C-abs}, \ruleref{C-t-abs}, and \ruleref{C-q-abs}.  
The only other apparent changes are in the rules for term application typing and variable typing. 
While those rules look different, they reflect how term abstractions are a combination of qualifier and term abstractions, and in that setting are no different than the standard rules for typing term variables, term application, and qualifier application! 
These changes to the typing rules are reflected in Figure \ref{fig:capt:typing}.

\paragraph*{Soundness}
Again, we can prove the standard soundness theorems for \Fc, using similar techniques as \citet{10.1145/3605156.3606454}.
\begin{theorem}[Preservation for \Fc]
Suppose $\Gamma \vdash s : T$, and $s \reduces t$.  Then $\Gamma \vdash t : T$ as well.
\end{theorem}
\begin{theorem}[Progress for \Fc]
Suppose $\varnothing \vdash s : T$.  Either $s$ is a value, or $s \reduces t$ for some term $t$.
\end{theorem}

\noindent 
In addition, we recover a {\it prediction lemma} \cite{10.1145/3486610.3486893,odersky2022scoped,boruchgruszecki2021tracking} relating how the free variables of values
relate to the qualifier annotated on their types; in essence, that the qualifier
given on the type contains the free variables present in the value {\tt v}.

\begin{lemma}[Capture Prediction for \Fc]
    Let $\Gamma$ be an environment and $v$ be a value such that $\Gamma \vdash v : \qtyp{Q}{S}$.  Then  $\Gamma \vdash \left\{ \bigvee_{y \in \fv(v)} y\right\} \sub Q$.
\end{lemma}

\section{Mechanization}
\label{section:mechanization}
The mechanization of \Fq (from Section \ref{section:fq}), its derived calculi, \Fm, \Fa, and \Fc, (from Section \ref{section:applications}), and extended \Fq (from Section \ref{section:multiple}), is derived from the mechanization of \Fsub by \citet{10.1145/1328438.1328443}, with some inspiration taken from the mechanization of \citet{10.1145/3605156.3606454} and \citet{Lee2023ToAppear}.
All lemmas and theorems stated in this paper regarding these calculi have been formally mechanized, though our proofs relating
the subqualification structure to free lattices are only proven in text, as we have found Coq's tooling for universal algebra lacking.

\section{Type polymorphism and Qualifier polymorphism}
\label{section:poly}
We chose to model polymorphism separately for qualifiers and simple types.
We introduced a third binder, qualifier abstraction, for enabling polymorphism
over type qualifiers, orthogonal to simple type polymorphism. 
An alternate approach one could take to design a language which needs to model polymorphism over type qualifiers is to have type variables range over {\it qualified types}, that is, types like {\tt \qual{mutable} Ref[Int]} as well as {\tt \qual{const} Ref[Int]}.  This approach can been seen in systems like \citet{10.1145/1103845.1094828, ZibinPLAE2010,Lee2023ToAppear}. However, it also comes with its difficulties: how do we formally interpret repeated applications of type qualifiers?  For example, with a generic {\tt inplace\_map} which maps a function
over a reference cell?

\begin{lstlisting}[language=Scala]
    case class Ref[X](var elem: X)
    // Is this well formed?
    def inplace_map[X](r: mutable Ref[X], f: const X => X): Unit = {
        r.elem = f(r.elem);
    }
\end{lstlisting}

\noindent

For example, what if {\tt inplace\_map} is applied on a {\tt Ref[\qual{const} Ref[Int]]}?  Then {\tt inplace\_map} would expect a function {\tt f} with type {\tt (\qual{const} (\qual{const} Ref[Int])) => \qual{const} Ref[Int]}.  While our intuition would tell us that {\tt \qual{const} (\qual{const}  Ref[Int])} is really just a {\tt \qual{const}  Ref[Int]}, discharging this equivalence in a proof is not so easy.  
Many systems, like \citet{10.1145/1287624.1287637}'s and \citet{10.1145/1103845.1094828}'s sidestep this issue by explicitly preventing type variables from being further qualified, but this approach prevents functions like {\tt inplace\_map} from being expressed at all.
Another approach, taken by \citet{Lee2023ToAppear}, is to show that these equivalences can be discharged through subtyping rules which {\it normalize} equivalent types. 
However, their approach led to complexities in their proof of soundness and it is unclear if their system admits algorithmic subtyping rules. 

Our proposed approach, while verbose, avoids all these complexities by explicitly keeping
simple type polymorphism separate from type qualifier polymorphism.  We would write
{\tt inplace\_map} as:
\begin{lstlisting}[language=Scala]
    case class Ref[Q, X](var elem: Q X)
    def inplace_map[Q, X](r: mutable Ref[{Q} X], f: const X => Q X): Unit = {
        r.elem = f(r.elem);
    }
\end{lstlisting}
Moreover, we can desugar {\it qualified type polymorphism} into a combination of
simple type polymorphism and type qualifier polymorphism.  We can treat a
qualified type binder in surface syntax as a {\it pair of simple type and type qualifier binders}, and have qualified type variables play double duty as simple type variables
and type qualifier variables, as seen in qualifier systems like \citet{wei2023polymorphic}'s.  So our original version
of {\tt inplace\_map} could desugar as follows:
\begin{lstlisting}[language=Scala]
    def inplace_map[X](r: mutable Ref[X], f: const X => X): Unit = {
        r.elem = f(r.elem);
    } // original
    def inplace_map[Xq, Xs](r: mutable[{Xq} Xs], f: const Xs => Xs): Unit = {
        r.elem = f(r.elem);
    } // desugared ==> X splits into Xq and Xs
\end{lstlisting}
One problem remains for the language designer however: how do type qualifiers interact with qualified type variables? 
In our above example we chose to have the new qualifier annotation {\tt \qual{const} X} strip away any existing type qualifier on {\tt X}; this is the approach that \citet{10.1145/1390630.1390656}'s Checker Framework take.  
Alternatively, we could instead merge the qualifiers together:
\begin{lstlisting}[language=Scala]
    def inplace_map[Xq, Xs](r: mutable[{Xq} Xs], f: {const | Xq} Xs => Xs): Unit = 
    {
        r.elem = f(r.elem);
    } // desugared ==> X splits into Xq and Xs
\end{lstlisting}

\section{Revisiting Qualifier Systems}
\label{section:revisit}
Free lattices have been known by mathematicians since \citet{43df3167-5a81-387d-88d7-2d29cdf1c881}'s time as the proper algebraic structure for modelling lattice inequalities involving formulas with variables---word problems---over arbitrary lattices.  In this light it is somewhat surprising that existing qualifier systems have not taken advantage of that structure explicitly, especially so given that is folklore knowledge in the literature that intersection and union types make the {\it subtyping} lattice a free lattice as well as \citet{Dolan} observed.  Here, we revisit some existing qualifier systems to examine how their qualifier structure compares to the structure we present with the free lattice of qualifiers.

\paragraph{A Theory of Type Qualifiers}
\citet{10.1145/301618.301665}'s original work introduced the notion of type
qualifiers, and gave a system for ML-style let polymorphism using a variant of \citet{DBLP:journals/tapos/OderskySW99}'s HM(X) constraint-based type inference.
Qualifier-polymorphic types in Foster's polymorphic qualifier system are a {\it type
scheme} $\forall \overline{Y}/C.T$ for some vector of qualifier variables $\overline{Y}$
used in qualified type $T$ modulo qualifier ordering constraints in $C$, such as
$Y_1 \sub Y_2$.  However, in their system, constraints cannot involve formulas with qualifier variables $X \sub Y_1 \wedge Y_2$ is an invalid constraint, nor are constraints
expressible in their source syntax for qualifier-polymorphic function terms.

\paragraph{Qualifiers for Tracking Capture and Reachability}
Our subqualification system was inspired by the
subcapturing system pioneered by \citet{10.1145/3618003} for use in their capability tracking system for Scala.  They model sets of free variables coupled with operations for merging sets together.  Sets of variables are exactly joins of variables -- the set $\{a, b, c\}$ can be viewed as the lattice formula $a \vee b \vee c$,
and their set-merge substitution operator $\{a, b, c\}[a \mapsto \{d, e\}] = \{d, e, b, c\}$, is just
substitution for free lattice formulas -- $(a \vee b \vee c)[a \mapsto (d \vee e)] = (d \vee e) \vee b \vee c$.
With this translation in mind we can see that they model a free (join)-semilattice, and that their subcapturing
rules involving variables in sets are just translating what the lattice join would be into a set framework.

Independently, \citet{wei2023polymorphic} recently developed a qualifier system for tracking reachability using variable sets as well.  Like \citet{10.1145/3618003}, their subqualification system models a free join-semilattice, with one additional wrinkle.  They model a notion of {\it set overlap}
respecting their subcapturing system as well as a notion of {\it freshness} in their framework to ensure
that the set of values reachable from a function are disjoint, or fresh, from the set of values
reachable from that function's argument.  While {\it overlap} exists only at the metatheoretic level and does
not exist in the qualifier annotations it can be seen that their notion of overlap is exactly the what the lattice {\it meet} of their set-qualifiers would be when interpreted as lattice terms.  Additionally, while freshness unfortunately does not fit in the framework of a free lattice, we conjecture that freshness can be modelled in a setting where lattices are extended with {\it complementation} as well, such as in free complemented
distributive lattices.

\paragraph{Boolean Formulas as Qualifiers}
\citet{10.1145/3485487} recently investigated modelling {\it nullability} as a type qualifier.  
Types in their system comprise a scheme of type variables $\overline{\alpha}$ and Boolean variables $\overline{\beta}$ over a pair of simple type $S$ and Boolean formula $(S, \phi)$,  where values of a qualified type $(S, \phi)$ are nullable if and only if $\phi$ evaluates to {\tt true}.\footnote{Technically they model a triple $(S, \phi, \gamma)$ where $\gamma$ is another Boolean formula which evaluates to {\tt true} if values of type $(S, \phi, \gamma)$ are non-nullable.}  
Boolean formulas form a Boolean algebra, and Boolean algebras are just complemented distributive lattices, so Boolean formulas over a set of variables $\overline{\beta}$ are just free complemented distributive lattices generated over variables in $\overline{\beta}$.  In this sense, we can view \citet{10.1145/3485487}
as a ML-polymorphism style extension of \citet{10.1145/301618.301665}'s
original work which solves Foster's original problem of encoding qualifier constraints:
one can just encode them using Boolean formulas in \citet{10.1145/3485487}'s system.

Unfortunately they do not model subtyping over their qualified types $(S, \phi)$; it would be sensible to say $(S, \phi) \sub (S, \phi')$ if $\phi \implies \phi'$.
They conjecture that such a subtyping system would be sound however.  
While we cannot answer this conjecture definitively, as we only model free lattices, not free complemented distributive lattice systems, it would be interesting future work to extend our framework and theirs
to see if a system modelling free complemented distributive lattice systems with subqualification is sound.

\paragraph{Reference Immutability for C\# \cite{10.1145/2384616.2384619}}
Of existing qualifier systems, the the polymorphism structure of \citet{10.1145/2384616.2384619} is
closest to \Fq. Polymorphism is possible over {\it both} mutability qualifiers and simple types in Gordon's system, but must be done separately, as in \Fq.  The {\tt inplace\_map} function that we discussed earlier would be expressed with both a simple type variable as well as with a qualifier variable:
\begin{lstlisting}[language=Scala]
    def inplace_map[Q, X](r: mutable Ref[{Q} X], f: readonly X => {Q} X): Unit
\end{lstlisting}
Gordon's system also allows for mutability qualifiers to be merged using an operator {\tt \textasciitilde>}.
For example, a polymorphic read function {\tt read} could be written as the following in Gordon's system:
\begin{lstlisting}[language=Scala]
    def read[QR, QX, X](r: {QR} Ref[{QX} X]): {QR ~> QX} X = r.f
\end{lstlisting}

Now, {\tt \textasciitilde>} acts as a restricted lattice join. Given two concrete mutability qualifiers C and D, {C \tt \textasciitilde> D} will reduce to the lattice join of $C$ and $D$.  However, the only allowable judgment in Gordon's system for {\tt \textasciitilde>} when qualifier variables are present, say {\tt C \textasciitilde> Y}, is that it can be widened to \qual{readonly}.

\paragraph{Reference Immutability for DOT \cite{dort_et_al:LIPIcs:2020:13175}}
 roDOT extends the calculus of Dependent Object Types  \cite{amin2016essence} with support for reference immutability.  In their system, immutability
constraints are expressed through a type member field $x.M$ of each object, where $x$ is mutable if and only if $M \leq \bot$, and $x$ is read-only if and only if $M \geq \top$. 
As $M$ is just a Scala type member, $M$ can consist of anything a Scala type could consist of, but typically it consists of type meets and type joins of $\top$, $\bot$, type variables $Y$, and the mutability members $y.M$ of other Scala objects $y$.

While this may seem odd, we can view $M$ as a type qualifier member field of its containing object $x$; the meets and joins in roDOT's $M$'s subtyping lattice correspond to meets and joins in \Fq's subqualification lattice. 
In this sense we can view type polymorphism in roDOT as a combination of polymorphism over simple types and type qualifiers in \Fq. A type $T$ in roDOT breaks down into a pair of a simple type $T \setminus M$ -- $T$ without its mutability member $M$ and $M$ itself.  In this sense \citet{dort_et_al:LIPIcs:2020:13175} provide a different method to encode subqualification; they encode it in type members $M$ and reuse the subtyping lattice
to encode the free lattice structure needed to deal with qualifier polymorphism and 
qualifier variables.

\section{Related Work}
\label{section:related}
\label{sec:related}
\subsection{Languages with Type Qualifier Systems}
\paragraph*{Rust}
\label{subsection:rust} The Rust community is currently investigating approaches \cite{Wuyts_Scherer_Matsakis} for adding qualifiers to Rust.  Their current proposal is to generalize the notion of qualified
types from being a pair of one qualifier and base type to be a tuple of qualifiers coupled to a base type.
Qualifier abstractions are keyed with the kind of qualifier (\qual{const}, \qual{async}, etc, ...) they abstract
over.  

This is easy to see sound using similar ideas to our proof of simplified \Fq, and avoids the complications around subqualification that free lattices over arbitrary lattices pose.  However
this proposal has proven controversial in the Rust community due the additional syntactic complexity it imposes.  

\paragraph*{OCaml}
The OCaml community \cite{Slater_2023_June,Slater_2023} is investigating adding {\it modes} to types for tracking,
in addition to value shapes, properties like {\it uniqueness}, {\it locality}, and {\it ownership},
amongst others; these modes are essentially type qualifiers.  However, modal polymorphism still remains
an open problem in OCaml.

\paragraph*{Pony} Pony's {\it reference capabilities} \cite{10.1145/2824815.2824816} are essentially type qualifiers on base types
that qualify how values may be shared or used.  Pony has qualifiers for various forms
of uniquness, linearity, and ownership properties.  While Pony has bounded polymorphism over {\it qualified types},
Pony does not allow type variables to be requalified, nor does it have polymorphism over qualifiers.

\subsection{Implementing Type Qualifiers}
The Checker Framework by \citet{10.1145/1390630.1390656} is an extensible framework for adding
user-defined type qualifiers to Java's type system.  
The Checker Framework in general allows for qualifying type variables with types, but in their system there is no relationship between a type variable {\tt X} and a qualified type variable {\tt \qual{Q} X}.
Re-qualifying a type variable strips any existing conflicting qualifier from that type variable and what it is instantiated with.

\subsection{Effect Systems}
\label{section:related:effects}
Effect systems are closely related to type qualifiers.  Traditionally, effect annotations
are used to describe properties of {\it computation}, whereas type qualifiers are used to describe
properties of {\it data}.  In the presence of first-class functions, this distinction is often blurred;
for example, modern C++ refers to \qual{noexcept} as a type qualifier on function types \cite{maurer-2015}, whereas traditionally it would be viewed as an effect annotation.  In contrast to type qualifiers, both {\it effect polymorphism} \cite{10.1145/73560.73564} and the {\it lattice structure of effects} \cite{10.1007/978-3-642-31057-7_13} are
well-studied.  However, the interaction of effect polymorphism with {\it subtyping} and {\it sub-effecting} remains understudied.

Many effect systems use {\it row polymorphism} to handle polymorphic effect variables with a restricted form of sub-effecting by subsets \cite{Leijen_2014}.  As for \citet{10.1007/978-3-642-31057-7_13}, they present a lightweight framework with no {\it effect variables}.  Formal systems studying sub-effecting respecting {\it effect bounds} on {\it effect variables} remain rare, despite Java's exception system being just that \cite[Section 8.4.8.3]{10.5555/2636997}.  Curiously, the two extant formal effect systems with these features share much in common with well-known qualifier systems. For example, \citet{leijen2010convenient}'s sub-effecting system can be viewed as a variant of \citet{10.1145/301618.301665}'s lattice-based subqualification system with HM(X)-style polymorphism.
More interestingly, \citet{DBLP:conf/oopsla/Gariano0S19}'s novel Indirect-Call$\varepsilon$ rule, \citet{wei2023polymorphic}'s reachability rule, and \citet{10.1145/3618003}'s subcapturing rule all model a free join-semilattice (of effects).  
In light of all these similarities, and of \citet{10.1145/3607846}'s recent work modelling effect systems with Boolean formulas, we conjecture
that a system modelling free distributive complemented lattices could be used to present an unifying treatment of both effects and qualifiers in the presence of subtyping, subeffecting, and subqualification.

\section{Conclusion}
\label{section:conclusion}
\label{sec:conclusion}
In this paper, we presented a recipe for modelling higher-rank polymorphism, subtyping, and subqualification in systems with type qualifiers by using the {\it free lattice} generated from an underlying qualifier lattice.
We show how a base calculus like \Fsub can be extended using this structure by constructing such an extension \Fq, and we show how the recipe can be applied to model three problems where type qualifiers are naturally suited---reference immutability, function colouring, and capture tracking.
We then re-examine existing qualifier systems to look at how {\it free lattices} of qualifiers show up, even indirectly or in restricted form.  
We hope that this work advances our understanding of the structure of polymorphism over type qualifiers.

\begin{acks}
We thank Brad Lushman, John Boyland, and Guannan Wei for their useful feedback in reading over early drafts of this work. 
We also thank Ross Willard for his useful insights into free lattices.
This work was partially supported by the Natural Sciences and Engineering Research Council of Canada and by an Ontario Graduate Scholarship.
\end{acks}

\bibliography{main}

\end{document}


\section{Supplementary Material}
\begin{lemma}\label{lemma:free:base}
    Let $\Gamma$ be an well-formed environment mapping qualifier variables to their upper bounds, and 
    let $\mathcal{Q}$ be the set of all qualifier terms as defined in \Fq that are well-formed in $\Gamma$;
    that is, bounded lattice formulas.
    Then the subqualification rules of \Fq give rise to the free lattice modulo the bounds in $\Gamma$.
    Namely, that:
    \begin{enumerate}
        \item $\sub$ defines a lattice over $\mathcal{Q}/\sub$, namely, the set of terms of $\mathcal{Q}$ that are not
            equivalent under the order $\sub$.
        \item If $Q[\overline{X}] \in \mathcal{Q}$ and $R[\overline{X}] \in \mathcal{Q}$ are well-formed 
            lattice equations over variables $\overline{X} \in \Gamma$
            such that $\Gamma \vdash Q[\overline{X}] \sub R[\overline{X}]$, then for any bounded lattice $L$ and any instantiation
            of the variables $\overline{X}$ to $\overline{L_X}$ respecting the bounds 
            in $\Gamma$ we have that $Q[\overline{X} \to \overline{L_X}] \leq R[\overline{X} \to \overline{L_X}]$.
    \end{enumerate}
\end{lemma}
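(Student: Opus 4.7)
For part (1), I would define the equivalence $Q \equiv R$ iff both $\Gamma \vdash Q \sub R$ and $\Gamma \vdash R \sub Q$, and then show that $\sub$ descends to a partial order on the quotient $\mathcal{Q}/{\equiv}$. Reflexivity and transitivity of $\sub$ are either primitive or admissible rules of \Fq, and antisymmetry on the quotient holds by construction. For the lattice structure, I would identify the syntactic meet and join of qualifier terms supplied by \Fq's grammar, and verify by inspecting their introduction and elimination rules that they compute greatest lower bounds and least upper bounds respectively under $\sub$; the top and bottom qualifiers are identified similarly, making the lattice bounded. Well-definedness of these operations on equivalence classes then reduces to monotonicity in each argument, which follows from transitivity combined with the characterizing rules.

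For part (2), I would induct on the derivation of $\Gamma \vdash Q \sub R$. Fix a bounded lattice $L$ and an instantiation $\sigma : \overline{X} \to \overline{L_X}$ respecting $\Gamma$'s bounds, and extend it homomorphically to all qualifier terms via the lattice operations of $L$, obtaining $\widehat{\sigma}$. Each subqualification rule of \Fq must be shown to preserve the invariant that $\widehat{\sigma}(Q) \leq \widehat{\sigma}(R)$ in $L$: reflexivity, transitivity, and the meet/join introduction and elimination rules all match standard lattice laws; the rules governing top and bottom match the bounded lattice laws; and the variable-bound rule, deriving $\Gamma \vdash X \sub U$ from $X : U \in \Gamma$, translates to $L_X \leq \widehat{\sigma}(U)$, which holds precisely because $\sigma$ respects the bounds in $\Gamma$.

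The main obstacle is the lattice verification in part (1): it requires correctly identifying the syntactic meet, join, top, and bottom of \Fq and checking rule-by-rule that $\sub$ has the right universal properties on these operations, so that nothing beyond the bounded lattice laws is being imposed. Some care is also needed when $\Gamma$ contains chained bounds (a qualifier variable bounded by a term mentioning other qualifier variables), which requires an auxiliary induction on the structure of $\Gamma$ to ensure $\widehat{\sigma}$ is coherent with the bound interpretation. Once this has been set up, part (2) itself is a routine structural induction, and the two parts together justify the motivating claim that \Fq's subqualification rules present the free bounded lattice generated by $\Gamma$'s variables modulo the bound inequalities.
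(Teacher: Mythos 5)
Your proposal matches the paper's proof in both structure and substance: part (1) via antisymmetry-by-quotient plus admissible reflexivity/transitivity and the meet/join introduction--elimination rules, and part (2) via structural induction on the subqualification derivation with the qualifier-variable cases as the only interesting ones, resolved by the instantiation respecting the bounds in $\Gamma$. The only cosmetic difference is that the paper's variable rule \ruleref{sq-var} carries a premise $\Gamma \vdash Q \sub R$ for the bound $Y \sub Q \in \Gamma$, so the chained-bound coherence you flag as needing a separate induction on $\Gamma$ is already absorbed into the derivation induction itself.
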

\begin{proof}
    To see that $\sub$ defines a lattice over $\mathcal{Q}/\sub$ we need to show that $\sub$ defines a partial order
    and that $\wedge$ and $\vee$ define the meet and join respectively of any two terms in $\mathcal{Q}$ with respect
    to $\sub$.

    By definition $\sub$ is anti-symmetric over $\mathcal{Q}/\sub$ as we are quotienting out already by terms equivalent
    under $\sub$.  Moreover, one can show by induction (as we show in our Coq formalization) that $\sub$ is reflexive and
    transitive as well.  Hence $\sub$ defines a partial order over $\mathcal{Q}/\sub$.  Now, to see that $\sub$ defines
    a lattice, we need to show that $\wedge$ and $\vee$ define the meet and join respectively of any two terms in $\mathcal{Q}$ with respect to $\sub$.  Now, to see that for two lattice terms $Q$ and $R$ that $Q \vee R$ and $Q \wedge R$ are
    lower and upper bounds for both $Q$ and $R$ one can apply the join introduction and the meet elimination rules.  
    It remains to show that they are the greatest lower bounds and the least upper bounds -- but that follows directly from an application of \ruleref{sq-join-elim} and \ruleref{sq-meet-intro}, as desired.

    Finally, we need to show that for two well-formed lattice equations $Q[\overline{X}] \in \mathcal{Q}$ and $R[\overline{X}] \in \mathcal{Q}$ over variables $\overline{X} \in \Gamma$ and $\overline{X} \in \Gamma$
    such that $\Gamma \vdash Q[\overline{X}] \sub R[\overline{X}]$, then for any bounded lattice $L$ and any instantiation
    of the variables $\overline{X}$ to $\overline{L_X}$ and $\overline{X}$ to $\overline{L_X}$ respecting the bounds 
    in $\Gamma$ we have that $Q[\overline{X} \to \overline{L_X}] \leq R[\overline{X} \to \overline{L_X}]$.  This we will
    do by structural induction over the subqualification derivation $\Gamma \vdash  Q[\overline{X}] \sub R[\overline{X}]$.
    Most of the rules hold directly, as they do not involve qualifier variables.  The only two interesting cases are
    the rules \ruleref{sq-var} and \ruleref{sq-refl-var}.  Now, for the \ruleref{sq-refl-var} case we have that for any lattice
    term $l$ that $l \leq l$ so $X[X \to l] \leq X[X \to l]$, as desired.  Finally, for the \ruleref{sq-var} case we have that
    $Y \sub Q \in \Gamma$ and $\Gamma \vdash Q \sub R$.  By induction and assumptions we have that $X$ has been instantiated
    to a lattice term $l \in L$ such that $l \leq Q[\overline{X} \to \overline{L_x}]$, and that $Q[\overline{X} \to \overline{L_X}] \leq R[\overline{X} \to \overline{L_X}]$.  So by transitivity we have that $l \leq R[\overline{X} \to \overline{L_X}]$, hence $X[\overline{X} \to \overline{L_X}] \leq  R[\overline{X} \to \overline{L_X}]$, as desired.
\end{proof}
\begin{lemma}\label{lemma:free:extended}
    Let $M$ be an arbitrary base bounded lattice, let 
    $\Gamma$ be an well-formed environment mapping qualifier variables to their upper bounds, and 
    let $\mathcal{Q}$ be the set of all qualifier terms as defined in extended \Fq that are well-formed in $\Gamma$;
    that is, lattice formulas involving lattice values in $M$.
    
    Then the subqualification rules of extended \Fq  give rise to the free lattice of extensions of $M$ modulo the bounds in $\Gamma$.
    Namely, that:
    \begin{enumerate}
        \item $\sub$ defines a lattice over $\mathcal{Q}/\sub$, namely, the set of terms of $\mathcal{Q}$ that are not
            equivalent under the order $\sub$.
        \item If $Q[\overline{X}] \in \mathcal{Q}$ and $R[\overline{X}] \in \mathcal{Q}$ are well-formed 
            lattice equations over variables $\overline{X} \in \Gamma$
            such that $\Gamma \vdash Q[\overline{X}] \sub R[\overline{X}]$, then for any bounded lattice $L$ extending $M$ and any instantiation
            of the variables $\overline{X}$ to $\overline{L_X}$ respecting the bounds 
            in $\Gamma$ we have that $Q[\overline{X} \to \overline{L_X}] \leq R[\overline{X} \to \overline{L_X}]$.
    \end{enumerate}
\end{lemma}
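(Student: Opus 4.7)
The plan is to parallel the proof of Lemma~\ref{lemma:free:base} as closely as possible, isolating the places where the presence of constants drawn from the base lattice $M$ introduces genuinely new obligations. For part (1), the argument that $\sub$ is a partial order on $\mathcal{Q}/\sub$ is literally unchanged: anti-symmetry is by construction of the quotient, and reflexivity and transitivity are established by the same structural induction as in Lemma~\ref{lemma:free:base} (with the extra base case for constants $m \in M$ dispatched by reflexivity of the order on $M$). The fact that $\wedge$ and $\vee$ compute meets and joins again follows directly from applications of \ruleref{sq-join-elim} and \ruleref{sq-meet-intro}, since these rules are insensitive to whether the subterms are variables, lattice constants, or composite formulas.

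For part (2) I would proceed by structural induction on the derivation of $\Gamma \vdash Q[\overline{X}] \sub R[\overline{X}]$, reusing the case analysis of Lemma~\ref{lemma:free:base} verbatim for every rule that does not involve an $M$-constant. The \ruleref{sq-var} and \ruleref{sq-refl-var} cases go through exactly as before. The new cases arise from the rules governing constants $m \in M$: namely, the rule that derives $\Gamma \vdash m_1 \sub m_2$ whenever $m_1 \leq_M m_2$, and the rules giving the lattice identities for $\wedge$ and $\vee$ on $M$-constants. For each of these, after instantiation the left- and right-hand sides remain elements of (the image of) $M$ inside $L$, and the goal reduces to the corresponding inequality in $L$.

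The main obstacle is making precise what it means for $L$ to \emph{extend} $M$, and discharging the constant cases against that definition. I would fix the convention that an extension is a bounded lattice $L$ together with an embedding $\iota : M \hookrightarrow L$ that is a bounded lattice homomorphism, so that $\iota$ preserves $\top$, $\bot$, binary meets, and binary joins, and reflects the order. Under this definition, every hypothesis $m_1 \leq_M m_2$ used in a constant case of the derivation lifts to $\iota(m_1) \leq_L \iota(m_2)$, and composite $M$-terms evaluate identically whether we compute in $M$ first and embed, or interpret directly in $L$. Combining these observations with the inductive hypothesis for the non-constant cases yields $Q[\overline{X} \to \overline{L_X}] \leq_L R[\overline{X} \to \overline{L_X}]$, which completes the argument.
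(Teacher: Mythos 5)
Your proposal is correct and takes essentially the same approach as the paper: part (1) is carried over unchanged from Lemma~\ref{lemma:free:base}, and part (2) is the same structural induction in which the only genuinely new obligations are the cases involving $M$-constants. The paper's new cases are specifically the partial-evaluation rules \ruleref{sq-eval-elim} and \ruleref{sq-eval-intro} rather than the constant-ordering and lattice-identity rules you conjectured, but your key observation --- that because $L$ extends $M$ via a bounded-lattice embedding, a composite $M$-term evaluates the same whether computed in $M$ first and embedded or interpreted directly in $L$ --- is precisely the fact the paper uses to discharge those cases (it is what justifies passing from $\eval Q' = l$ in $M$ to $Q' = l$ in $L$), and your explicit definition of ``extension'' usefully makes precise something the paper leaves implicit.
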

\begin{proof}
    To see that $\sub$ defines a lattice over $\mathcal{Q}/\sub$ we need to show that $\sub$ defines a partial order
    and that $\wedge$ and $\vee$ define the meet and join respectively of any two terms in $\mathcal{Q}$ with respect
    to $\sub$.

    By definition $\sub$ is anti-symmetric over $\mathcal{Q}/\sub$ as we are quotienting out already by terms equivalent
    under $\sub$.  Moreover, one can show by induction (as we show in our Coq formalization) that $\sub$ is reflexive and
    transitive as well.  Hence $\sub$ defines a partial order over $\mathcal{Q}/\sub$.  Now, to see that $\sub$ defines
    a lattice, we need to show that $\wedge$ and $\vee$ define the meet and join respectively of any two terms in $\mathcal{Q}$ with respect to $\sub$.  Now, to see that for two lattice terms $Q$ and $R$ that $Q \vee R$ and $Q \wedge R$ are
    lower and upper bounds for both $Q$ and $R$ one can apply the join introduction and the meet elimination rules.  
    It remains to show that they are the greatest lower bounds and the least upper bounds -- but that follows directly from an application of \ruleref{sq-join-elim} and \ruleref{sq-meet-intro}, as desired.

    Finally, we need to show that for two well-formed lattice equations $Q[\overline{X}] \in \mathcal{Q}$ and $R[\overline{X}] \in \mathcal{Q}$ over variables $\overline{X} \in \Gamma$ and $\overline{X} \in \Gamma$
    such that $\Gamma \vdash Q[\overline{X}] \sub R[\overline{Y}]$, then for any bounded lattice $L$ and any instantiation
    of the variables $\overline{X}$ to $\overline{L_X}$ and $\overline{X}$ to $\overline{L_X}$ respecting the bounds 
    in $\Gamma$ we have that $Q[\overline{X} \to \overline{L_X}] \leq R[\overline{X} \to \overline{L_X}]$.  This we will
    do by structural induction over the subqualification derivation $\Gamma \vdash  Q[\overline{X}] \sub R[\overline{X}]$.
    Most of the rules hold directly, as they do not involve qualifier variables nor partial evaluation; 
    this includes the case for the rule \ruleref{sq-lift}.  The only four interesting cases are
    the rules \ruleref{sq-var}, \ruleref{sq-refl-var}, \ruleref{sq-lift}, and the two evaluation rules \ruleref{sq-eval-elim}
    and \ruleref{sq-eval-intro}.  The first two rules can be shown to hold via the same proof as in Lemma \ref{lemma:free:base}.
    So it remains to deal with the two cases for \ruleref{sq-eval-elim} and \ruleref{sq-eval-intro}.  As these cases
    are symmetric we may assume without loss of generality that we are working with the rule \ruleref{sq-eval-elim}.
    Hence we have $Q$, $Q'$, and $R$, lattice formulas over lattice values in $M$ such that $\Gamma \vdash Q \sub Q'$,
    $\eval Q' = l$ for some lattice value $l \in M$, and $\Gamma \vdash l \sub R$.  By structural induction
    we have that $Q[\overline{X} \to \overline{L_X}] \leq Q'[\overline{X} \to \overline{L_X}]$ and that
    $l \leq R[\overline{X} \to \overline{L_X}]$.  However, as $\eval Q' = l$, we have that $Q'$ is simply the lattice
    meet and join of lattice values in $L$ that evaluate to $l$; hence $Q' = l$ in $M$.  Hence we have that
    $Q[\overline{X} \to \overline{L_X}] \leq R[\overline{X} \to \overline{L_X}]$, as desired.
\end{proof}